\def\footnoterule{\kern-3\p@
  \hrule \@width 2in \kern 2.6\p@} 
\newtheorem{proposition}{Proposition}
\titlespacing*{\section}{0pt}{3mm}{0pt}
\titlespacing*{\subsection}{0pt}{2.5mm}{0pt}
\newcolumntype{P}[1]{>{\centering\arraybackslash}p{#1}}
\newcolumntype{M}[1]{>{\centering\arraybackslash}m{#1}}
\DeclareMathAlphabet{\mathpzc}{OT1}{pzc}{m}{it}
\def\be{ \begin{equation} }
\def\ee{ \end{equation} }
\def\bea{ \begin{eqnarray} }
\def\eea{ \end{eqnarray} }
\def\b0{{\bf 0}}
\newcommand*\dashline{\rotatebox[origin=c]{90}{$\dabar@\dabar@\dabar@$}}
\newcommand*{\rom}[1]{\expandafter\@slowromancap\romannumeral #1@}
\begin{document}
\title{Modeling RIS Empowered Outdoor-to-Indoor Communication in mmWave Cellular Networks}
\author{Mahyar Nemati, \textit{Member, IEEE}, Behrouz Maham, \textit{Senior Member, IEEE}, Shiva Raj Pokhrel, \textit{Member, IEEE} and Jinho Choi, \textit{Senior Member, IEEE}
\thanks{M. Nemati, S. R. Pokhrel, and J. Choi are with
the School of Information Technology, 
Deakin University, 
Geelong, VIC 3220, Australia (e-mail: nematim@deakin.edu.au, shiva.pokhrel@deakin.edu.au, jinho.choi@deakin.edu.au)}
\thanks{B. Maham is with the School of Engineering, Nazarbayev University, Astana, Kazakhstan (e-mail: behrouz.maham@nu.edu.kz) }
}
\date{today}
\maketitle

\begin{abstract}
With the increasing adoption of millimeter-waves (mmWave) over cellular networks, outdoor-to-indoor (O2I) communication has been one of the challenging research problems due to high penetration loss of buildings. To address this, we investigate the practicability of utilizing reconfigurable intelligent surfaces (RISs) for assisting such O2I communication. We propose a new notion of prefabricated RIS-empowered wall consisting of a large number of chipless radio frequency identification (RFID) sensors. Each sensor maintains its own bank of delay lines. These sensors which are built within the building walls can potentially be controlled by a main integrated circuit (IC) to regulate the phase of impinging signals. To evaluate our idea, we develop a thorough performance analysis of the RIS-based O2I communication in the mmWave network using stochastic-geometry tools for blockage models. Our analysis facilitates two closed-form approximations of the downlink signal-to-noise ratio (SNR) coverage probability for RIS-based O2I communication. We perform extensive simulations to evaluate the accuracy of the derived expressions, thus providing new observations and findings.
\end{abstract}
{\IEEEkeywords
Millimeter-wave (mmWave), outdoor-to-indoor (O2I) communication, reconfigurable intelligent surface (RIS).
}
\section{Introduction}

In recent years, the adoption of millimeter-wave (mmWave) spectrum has signiﬁcantly drawn attention due to its wide available bandwidth for high data rate wireless communication in fifth generation (5G) and beyond cellular systems. 
Despite the wide bandwidth availability, there are also challenges to overcome.
In particular, the mmWave communication range is often limited due to its unique propagation characteristics \cite{RAP2,JIEIoT}. 
For instance, \textit{outdoor-to-indoor} (O2I) communication has yet to be supported by mmWave networks due to the large penetration loss of building materials, such as concrete walls and metal frames. Such obstacles block mmWave signal transmissions from an outdoor base-station (BS) to an indoor user equipment (UE) inside the buildings. To this end, it is desirable to aid O2I communication to overcome the blockage and potentially improve the signal-to-noise ratio (SNR) with a newly proposed concept -- known as \emph{software controlled metasurfaces} \cite{metasurfaces}.

At a high-level, our primary focus in this work is on exploiting such a metasurface architecture that acts as an O2I refractor. It is based on our novel idea of developing wall-based fabrication consisting of several radio-frequency-identification (RFID) \cite{NematRFID} sensors. To elaborate, the chipless RFID sensors fabricated into the wall thickness with two outdoor and indoor antennas will form a system to transit the signal from O2I space. We adopt mathematics as a tool for reasoning the feasibility of the proposed system with approximations for the signal strength and O2I communication and also estimating the blockage effect. 

\subsection{What is a Software Controlled Metasurface?}
Software controlled metasurfaces have recently attracted much attention for the sixth generation (6G) of wireless communications and beyond \cite{towards,Ertrul6G}. 
These surfaces which are commonly known as \textit{reconfigurable intelligent surfaces} (RISs)\footnote{Throughout this paper, we use term RIS
to refer to any type of intelligent walls and metasurfaces.} \cite{ertrul1} have given rise to the emerging concept of \textit{smart radio environments} \cite{dismart} and are to provide an unprecedented degree of freedom in engineering wave-matter interactions for a broad range of the operational frequencies ranging from microwave to mmWave bands \cite{metasurfaces}.
RIS can turn the wireless channel, which is highly probabilistic in nature, into a controllable and
partially deterministic phenomenon \cite{RenzoRIS}. 
Specifically, RISs enable network operators to control the reflection, absorption, and refraction characteristics of the radio waves in an energy efficient-way \cite{ertrul1}. 

Each RIS is a nearly-passive smart surface including two parts: 1) passive part containing
a large number of low-cost full-duplex passive elements, i.e., \textit{unit cells}, and 2) a simple active integrated circuit (IC) acting as the wavefront controller \cite{nemati2020ris}. 
The passive RIS-elements are man-made electromagnetic sensors that are intelligently controlled by the main IC to effectively control the wavefront characteristics such as phase, frequency, amplitude, and even polarization of the impinging signals \cite{ertrul1,Ertrul6G,towards,RenzoRIS}. 
%
%
%
%
These passive sensors are always on and can reflect, refract, or even absorb the impinging signals at all time \cite{RuiRISNet}. 
In \cite{GongSurvey,GenFadRIS, liu2020, wu2020}, comprehensive overviews characterizing the performance of RIS-assisted communications affecting the propagation environments can be found. 
%
\subsection{Recent Works}
The recent advances and research of the RIS 
\cite{ergodic,R1,Mimo-1,R2,R3,AsymSINR,RuiSpa,basar2020simris} has mostly been
concentrated on its ability to reflect. 
This is reflected for example by the recent design in  analytic framework for quantifying the ergodic capacity of the RIS-aided reflection models, as investigated in \cite{ergodic}.
%
%
In \cite{R1}, the reflection impact of large-scale RIS deployments on the
cellular networks was studied. 
%
%
In \cite{R2}, an analytical probabilistic framework for successful reflection of RIS was provided. 
%
In \cite{AsymSINR}, an optimal linear precoder along with an RIS deployment in a single cell for multiple UEs was used to improve the coverage performance of the communications.
In \cite{RuiSpa}, a characterization of the spatial
throughput for a single-cell multiuser system assisted by multiple RISs that are randomly deployed in the cell was provided. It showed that the RIS-assisted model outperforms the full-duplex relay-aided counterpart system in
terms of spatial throughput when the number of RISs exceeds
a threshold.  Most recently, an investigation in \cite{basar2020simris} discussed the potential use-cases of RISs in future
wireless systems using a novel channel modeling methodology as well as a new software tool for RIS-aided mmWave networks.
%
\begin{figure}[t]
\centering
\subfloat[RIS acts as a concave mirror. ]{
 \includegraphics[width=3.5cm, height=2.6cm]{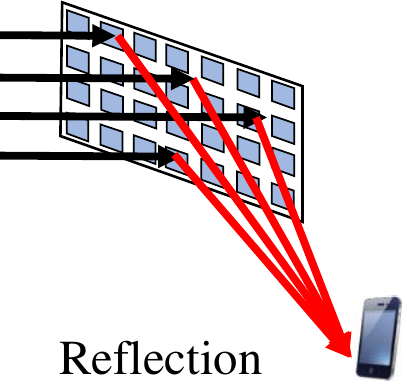}%
 }
 \hspace{4mm}
 \subfloat[RIS acts as a concave lens.]{
  \includegraphics[width=4cm, height=2.6cm]{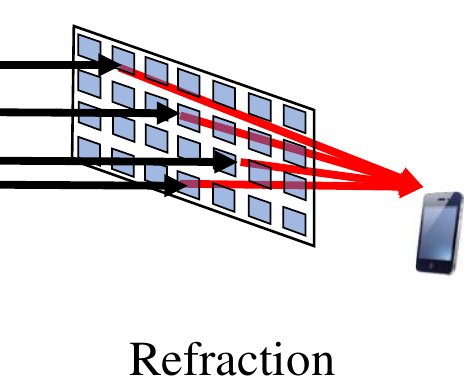}%
  }
 \captionsetup{width=1\linewidth}
 \caption{ Reflection \& refraction of the signal rays towards a UE when a) the UE is located in the same side of the serving BS and b) the UE is in the opposite side.} 
 \label{REFRAC}
\end{figure}
However, in the aforementioned studies, RIS works as a reflection surface (e.g., like a concave mirror), as shown in Fig. \ref{REFRAC} (a), while its refraction ability/use-cases studies (e.g., acting as a concave lens) are limited (as shown in Fig. \ref{REFRAC} (b)). 
Since the RISs are mostly supposed to be surfaces hanging over walls, e.g., concrete walls, a question comes to the mind that if the UE is located in the
other side of the wall, can the incident signal be refracted
through the RIS-empowered wall to reach the UE with enhanced SNR? 

From the geometrical optics perspective, both the anomalous reﬂection and refraction at the metasurface can be described by the generalized laws of reﬂection and refraction, respectively \cite{liu2020}. 
Motivated by the refraction purpose, in \cite{Refr2018_1}, it is experimentally verified that an acoustic cell prototype can provide enough degrees of
freedom to fully control the refraction angles of 60, 70, and 80 degrees.
Moreover, in \cite{perfectRef}, an overview was conducted to show that ideal refraction is feasible only if the metasurface is bianisotropic with weak spatial dispersion. This effect is described by the relations between the
exciting electric and magnetic ﬁelds and the induced polarizations in the RIS-sensors. Nevertheless, these proposed RIS prototypes are not able to overcome the building penetration loss and deliver the incident signal in mmWave band to the other side of, e.g., a concrete wall.
%
 \begin{figure}[t]
        \centering
        \captionsetup{width=1\linewidth}
        \subfloat[Large antenna array at the BS executes high-power and narrow beamforming to penetrate building materials.]{%
         \includegraphics[width=4.6cm, height=2.6cm]{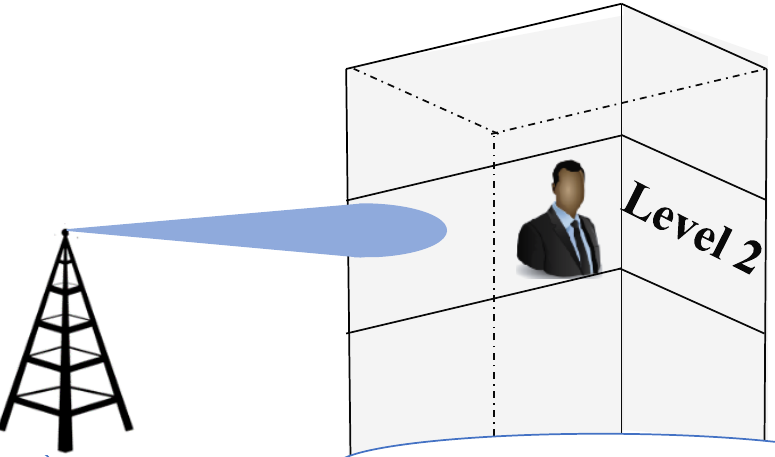}%
         } 
        \captionsetup{width=1\linewidth}
        \subfloat[Relay-aided model. BS targets the relay which transfers the signal into the indoor space using some types of low-loss conductors.]{%
         \includegraphics[width=4.6cm, height=2.8cm]{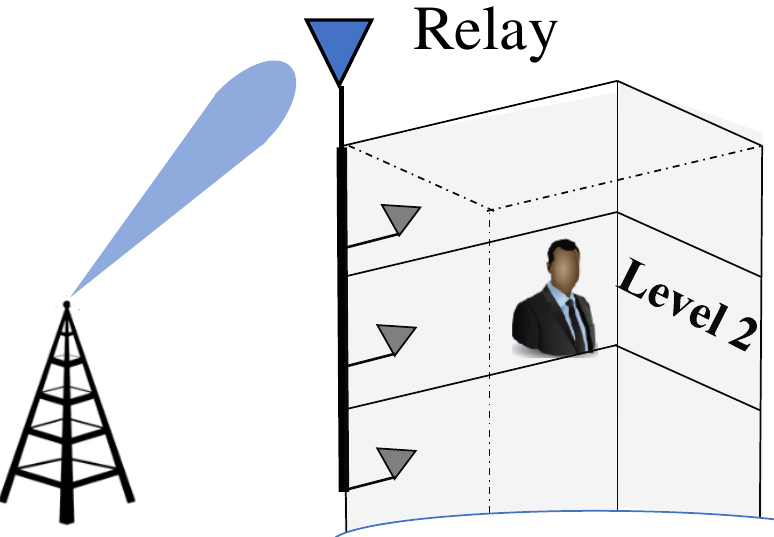}%
          \label{fig:tagB}}
        \captionsetup{width=1\linewidth}
        \subfloat[RIS-aided model. Small antenna array at the BS sends a wide beam towards the building where the UE is located at. RIS acts as a transition wall to refract the signal towards the indoor UE.]{%
         \includegraphics[width=4.6cm, height=2.6cm]{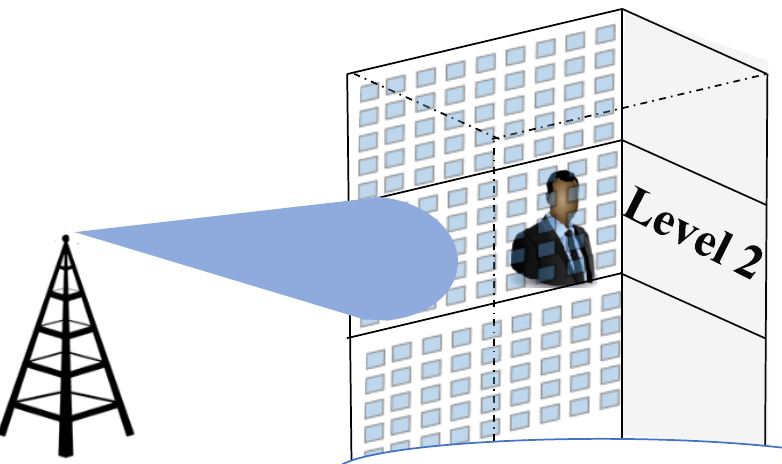}%
          \label{fig:tagB}}
         \captionsetup{width=1\linewidth}
         \caption{ Comparison between traditional, relay- and RIS-assisted O2I communication in mmWave cellular networks.}
         \label{fig:comps}
    \end{figure}
\subsection{Outdoor-to-Indoor Communication in MmWave Band}
In the mmWave band, 
 it is challenging to serve an indoor UE by an outdoor BS  because of the large penetration loss of the building materials and almost total
blockage of building walls \cite{OUT3}. 
In \cite{OUT02,OUT3,OUT4}, the indoor coverage at 
mmWave band in a
building with an outdoor BS was studied.
It is illustrated that the O2I coverage at mmWave band is quite difficult\footnote{It is noteworthy that this difficulty can be expanded to indoor-to-indoor communication. However, we only focus on O2I scenario in this study.} and the
throughput is seriously affected, depending on the wall
materials. 
Consequently, due to the large penetration loss, enabling the O2I communication needs ultra high power beams that can penetrate the building materials and reach the UE as shown in Fig. \ref{fig:comps} (a). However, forming such narrow and high power beams at the BS requires a large antenna array that might be complex and challenging and also restricted by federal laws. 

In the academia and industry, the potential proposed solution for O2I communication is relay-aided systems \cite{OUT1,Sudas}. As shown in Fig. \ref{fig:comps} (b),
in this model, which is also known as either small-cell or femto-cell concepts for coverage extension, an active relay is placed somewhere on the outdoor side of the building (usually on the roof) where it is exposed to the outdoor BS's beam. Then, the message is processed and carried by some types of low-loss conductors (e.g., fiber optic wires or other wave-guide designs \cite{OUT1}) towards the indoor transmitter antenna to be sent towards the indoor UE. The relay-aided communications are proposed in both half- and full-duplex models and require expensive hardware components. 
However, in \cite{OUT1, ertrul1,towards,Ertrul6G}, it is shown that 
full-duplex relaying has a number of drawbacks such
as signal processing complexity, noise enhancement, power
consumption and self-interference cancellations at the relay
stations. Therefore, the active relay usually operates in
half-duplex mode and is thus less spectrum efficient and still more complex and expensive than
the passive RIS that operates in full-duplex mode \cite{towards}. 
%
\subsection{Scope and Organization}
Our aim to fill the gap for O2I mmWave communication start with the feasibility analysis of a new intelligent wall architecture for intelligent transition of the signal from outdoor to indoor space as shown in Fig. \ref{fig:comps} (c); and 
then we model its performance in the presence of different blockage scenarios for both outdoor and indoor spaces. 
 %
%
 %
The contributions of this paper are summarized as follows.
\begin{itemize}
    \item We propose a new RIS architecture that acts as an O2I refractor. In particular, different from the traditional adhesive ultra-thin RIS layers, we propose a new prefabricated RIS-empowered wall which is equipped by a large number of chipless RFID sensors, operating in mmWave band.  
    Sensors are in fact built into the concrete\footnote{The proposed architecture is not limited to the concrete and can be expanded to other materials, e.g., wood, brick, etc.} wall as shown in Fig. \ref{fig:comps} (c).
    \item The chipless RFID sensors are built into the wall thickness with two outdoor and indoor antennas that can transit the signal from O2I space and converge the dispersed received signals by the wall towards an indoor UE. In fact, each RIS-sensor contains a bank of delay lines and are controlled by a main IC to adjust the phase-shift of an impinging signal and perform passive beamforming and enhance the SNR at the UE. We derive closed-form approximations for the SNR coverage probability of the RIS-assisted O2I communication model utilizing stochastic-geometry tools for blockage models.  
    \item We show that the proposed RIS-assisted model provides a diversity gain due to the wide surface of the RIS. As a result, it reduces the blockage probability significantly compared to the similar relay-aided counterpart, shown in Figs. \ref{fig:comps} (b) and (c), and increases the chance of a successful O2I communication. Moreover, the proposed model does not require complex large antenna arrays at the BS for overcoming the penetration loss of the building materials shown in Fig. \ref{fig:comps} (a). 
\end{itemize}

The rest of the paper is organized as follows. In Section II, we present the system model of the RIS-assisted O2I communication in mmWave band. Then, the principles of the blockage model are discussed in Section \rom{3}. Subsequently, the SNR coverage analysis of the RIS-assisted model is provided in Section \rom{4}. Simulation results and comparisons are discussed in Section \rom{5}. Finally, Section \rom{6} concludes the paper.

 \textit{Notation:} The 2-norm and absolute value of $\mathbf{a}$ and $a$ are denoted by $||\mathbf{a}||$ and $|a|$, respectively. $(.)^T$ and $\odot$ denote the transpose operation and element-wise multiplication, respectively. $\mathcal{CN} (\mathbf{a},\mathbf{R})$ represents the distribution of circularly symmetric complex Gaussian (CSCG) rendom vectors with mean vector $\mathbf{a}$ and covariance matrix $\mathbf{R}$. The Gaussian Q-function is given by $\mathcal{Q}(x)=\frac{1}{\sqrt{2\pi}}\int^\infty_x e^{\frac{-z^2}{2}} dz$.

\section{System Model Description}
Suppose a single cell scenario with an outdoor BS in a distance of $R$ from a building of interest where the UE is located at as shown in Fig. \ref{fig:system}. We focus on downlink SNR coverage experienced by a UE which is equipped with an omni-directional antenna and the BS communicates with it on mmWave bands.
    Since in mmWave frequencies, signals are more disposed to be blocked by natural obstacles in the communication area, there can be different types of obstacles classified as static and dynamic blockers in the communication area as shown in Fig. \ref{fig:system}. 
    Obviously, there is no line-of-sight (LoS) link between the outdoor BS and the indoor UE. Instead there is an RIS-empowered wall crossing the wireless channel. In the following, we propose the system model in six stages.
    
    \begin{figure}[t]
            \centering
            \captionsetup{width=1\linewidth}
            \includegraphics[width=9cm, height=5.2cm]{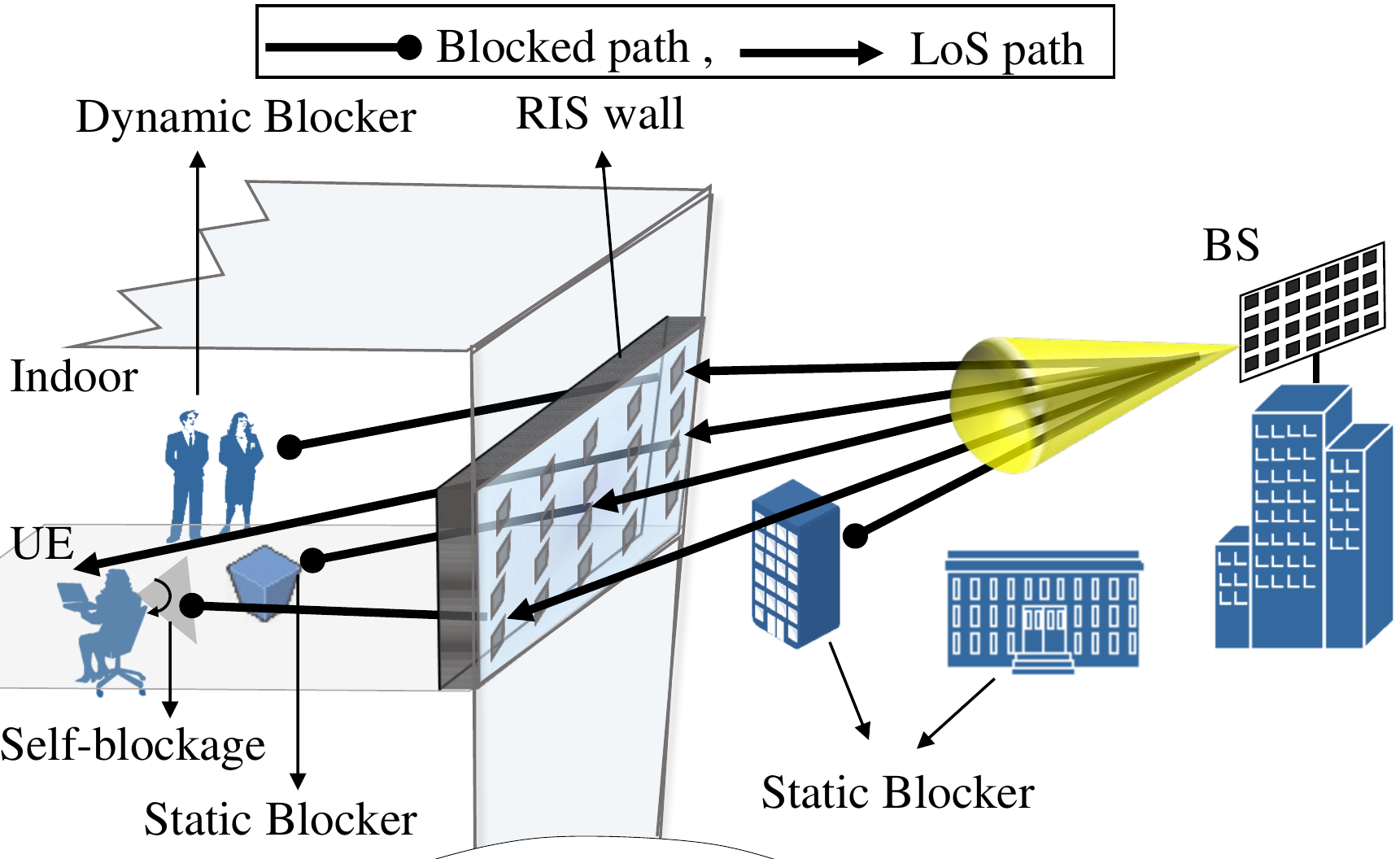}
            \caption{System model illustration.}
            \label{fig:system}
    \end{figure}

    %
 
    \subsection{Stage 1: At the BS}
    At the BS, a uniform planar square phased-array (UPA) of $M$ antenna elements, i.e., $\sqrt{M}\times \sqrt{M}$ grid, is deployed to exploit the beamforming in the mmWave band towards the RIS with a total of $N$ passive sensors.
    Let $s(t)$ and $\mathbf{w}_b\in\mathbb{C}^{M\times 1}$ represent the base-band message in the time domain and the constant weight vector at the BS, respectively. Note that $\mathbf{w}_b$ controls the gain of the beam steered towards the RIS. We assume that the peak effective radiated power (ERP) of the beam targets the center of the RIS wall. Specifically, if $r_m$ denotes the distance between the $m^\text{th}$ antenna and the center of the RIS wall, then the time-aligned transmitted signal from the $m^\text{th}$ antenna is
        \begin{align}
        s\left(t-\frac{r_m}{c}-\tau_m\right)= s\left(t-\frac{r_0}{c}+\frac{r_0-r_m}{c}-\tau_m\right),
        \label{trx}
    \end{align}
    where $c$ and $\tau_m$ represent the wave-speed and the time delay of $m^\text{th}$ antenna where $m=0,\cdots , M-1$, respectively. Additionally, $0^\text{th}$ antenna is set as a reference antenna. In fact, $\mathbf{w}_b$ entries correspond to these time delays which are associated with the phase-shifts at the BS antennas. $\tau_m$ only takes a ﬁnite number of discrete values and is given by
     $   \tau_m=\frac{r_0-r_m}{c}.$
    Thus, the transmitted beam which is a superposition of effects of $M$ antennas can be written as
    \begin{align}
        x(t)= \mathbf{w}_b s(t).
        \label{trx1}
    \end{align}
    The array gain is assumed to be constant for all angles. 
    It is noteworthy that the signal ray targeting the origin of the RIS becomes a superposition of $M$ phase-aligned signals, i.e., $x(t)\propto Ms\left(t-\frac{r_0}{c}\right)$; and consequently, the peak ERP of the beam scales up by $M^2$. 
     
    \subsection{Stage 2: BS-RIS Link}
   \begin{figure}[t]
        \centering
        \captionsetup{width=1\linewidth}
        \subfloat[Chipless RFID sensors are built into the concrete wall conducting the incident signals from outdoor to indoor space.]{%
         \includegraphics[width=8cm, height=4cm]{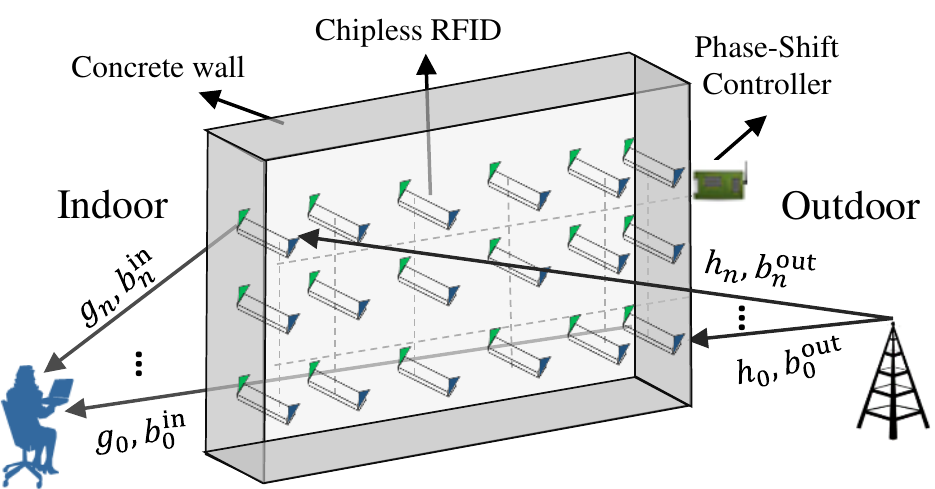}%
         } 
         \hspace{0.1\linewidth}
        \subfloat[Each chipless RFID sensor includes receive and transmit antennas along with a bank of delay lines that can be controlled by the phase-shift controller to adjust the phase-shift at the sensor and execute passive beamforming towards the indoor UE. Discrete delay lines result in discrete phase-shifts.]{%
         \includegraphics[width=5.2cm, height=3.8cm]{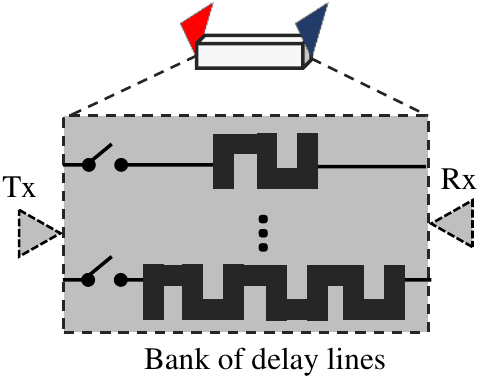}%
          }
         \captionsetup{width=1\linewidth}
         \caption{A prefabricated RIS-empowered concrete wall. }
         \label{fig:RiS}
    \end{figure}
    The signal transmitted from the BS experiences a large-scale pathloss indicated by $L=\mathcal{C}_LR^{-\alpha}$ where $\mathcal{C}_L$ and $\alpha$ are the intercept of the large-scale fading and pathloss exponent, respectively \cite{Cov&rate}.
    In addition, suppose that $\mathbf{S}_b\in \mathbb{C}^{M\times N}$ denotes the steering matrix of the BS phased-array towards the RIS-sensors.
    In fact, $\mathbf{S}_b$ is made by column vectors of $\mathbf{f}_n\in\mathbb{C}^{M\times 1}$ for $n=0,\cdots,N-1$, each refers to the steering vector towards $n^\text{th}$ RIS-sensor. 
    Then, the signal rays travel through $N$ paths with baseband equivalent channel vector of $\mathbf{h}=[h_0,\cdots, h_{N-1}]^T$ concatenated with outdoor blockage vector of $\mathbf{b}^\text{out}=[b^\text{out}_0,\cdots,b^\text{out}_{N-1}]$, $b^\text{out}_n\in \{0,\, 1\}$ to reach the RIS-sensors as shown in Fig. \ref{fig:RiS} (a), i.e., $b^\text{out}_{n}=0$ means the $n^\text{th}$ link is blocked.
    Then, the BS-RIS channel matrix, denoted by $\mathbf{H}\in \mathbb{C}^{N\times M}$, is given by
    \begin{equation}
        \mathbf{H}= [\text{diag}(\mathbf{b}^\text{out})]\times [\text{diag}(\mathbf{h})]\times \mathbf{S}_b^T.
        \label{BS1}
    \end{equation}
     We assume that the half-power beamwidth of the transmitted beam covers the RIS.
    Subsequently, the $n^\text{th}$ RIS-sensor receives a portion of the peak ERP associated with steering vector $\mathbf{f}_n$ but indicated by a factor $G_b(\phi_n,\theta_n)\in (0.5\hspace{2mm} 1]$ in direction of $(\phi_n,\theta_n)$ from the reference antenna at the BS. 
    The exact magnitude of $G_b(\phi_n,\theta_n)$ can be measured once for all sensors since the BS and the RIS are stationary. We assume that $G_b(\phi_c,\theta_c)=1$ is associated with the peak ERP targeting the central RIS-sensor. 
    As a result, the received signal at the $n^\text{th}$ RIS-sensor is given by
    \begingroup\makeatletter\def\f@size{9.5}\check@mathfonts
    \begin{equation}
        y_n(t)=\sqrt{L}\, b^\text{out}_{n} h_n \mathbf{f}_n^T x(t)=\sqrt{G_b(\phi_n,\theta_n) L}\, b^\text{out}_{n} h_n  M s(t).
        \label{ritx}
    \end{equation}
    \endgroup
    Throughout this study, the outdoor space is assumed to be a static environment with random blockages. In other words, the BS, the RIS, and reﬂecting/scattering objects are assumed to be stationary. The BS and the RIS are both at high altitude above ground so that there is no reflection from low altitude random objects on the ground as well. Thus, the path gains are assumed to either be fixed or vary slowly.
    As a result, $h_n$ for all $N$ paths can be modeled as deterministic channels. Note that the remaining random obstacles are assumed as perfect blockers without reflection/diffraction and are modeled as a blockage factor $b^\text{out}_n$.

    \subsection{Stage 3: At the RIS}
    \begin{figure}[t]
            \centering
            \includegraphics[width=6cm, height=2.8cm]{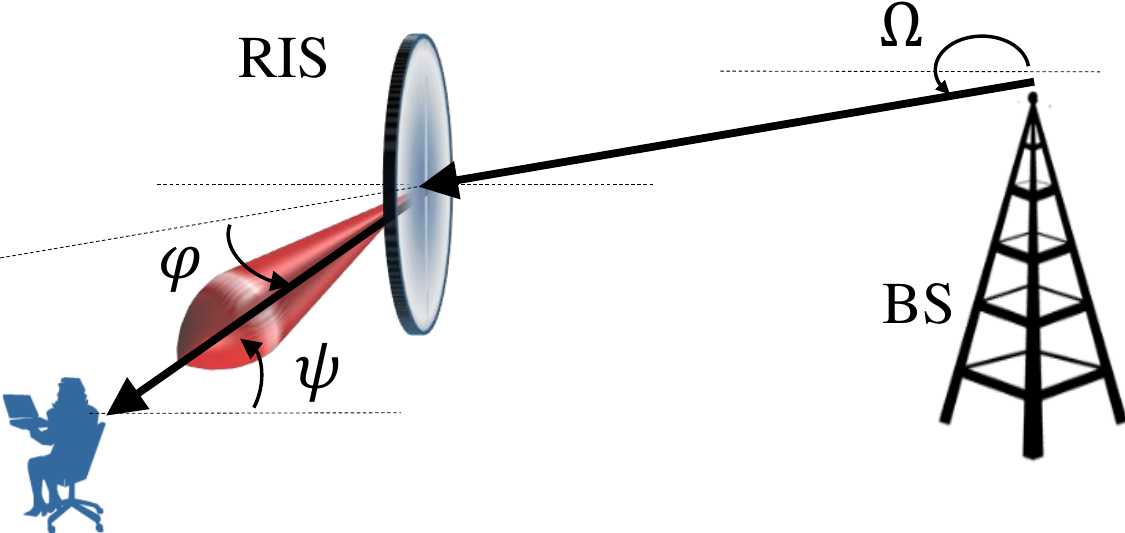}
            \caption{2-dimensional steering orientation shift at the RIS.}
            \label{beam1}
    \end{figure}
  At the third stage, the RIS receives the signal and refracts it towards the indoor UE. 
  In this study, by taking advantage of chipless RFID sensors implemented in mmWave band \cite{zomRFID,chipless19}, we propose a prefabricated RIS wall architecture where many of these passive chipless RFID sensors are built within the concrete wall as shown in Figs. \ref{fig:RiS} (a) and (b). In fact, the chipless RFID sensors that are built into the wall
thickness with two outdoor and indoor antennas are
able to transit the signal from O2I\footnote{Note that the usual RIS-elements in the literature operate in full-duplex mode since the whole
unit cell is in the same space (i.e., either in outdoor or in indoor space). However, the proposed chipless RFID sensors are built into the wall and each of its antennas operates in a different space, i.e., either outdoor or indoor. Therefore, the sensors can only operate in half-duplex mode due to separated Tx and Rx antennas. In
this study, the sensors are hired for downlink transmission. Nevertheless, similar cheap sensors can be deployed in parallel but in opposite direction for
uplink signals to provide the full-duplex transmission. Here, we just focus on downlink scenario and omit the uplink scenario without loss
of generality.} 
 space and converge
the dispersed received signals by the wall towards an
indoor UE. Motivated by this, Fig. \ref{fig:RiS} (b) illustrates that each chipless RFID sensor contains a bank of delay lines.  Next,
the phase-shift controller decides that each sensor uses a specific delay line for phase-shift adjustment to execute a passive beamforming towards the indoor UE. Here, the time delay is associated with the phase-shift.
There is extensive efforts in the
literature (e.g., in \cite{zomRFID,chipless19,CheapChip} and references 
therein) where similar passive chipless RFID sensor structures with passive components
are proposed and it is shown that the cost of these low-cost chipless RFID sensors is expected to be about 10-20 euro-cents \cite{CheapChip}. However, evaluating the electronic circuit of the
sensor in details is beyond the scope and space of this paper.

Let $\mathbf{\Lambda}\in\mathbb{C}^{N\times N}$ denote the diagonal phase-shift controller matrix at the RIS. 
Similar to the beamforming procedure at the BS, $\mathbf{\Lambda}$ diagonal entries are in fact passive beamforming coefficients that are made by the selected time delay lines in the passive RIS-sensors and are associated with phase-shifts indicated by $e^{j\Delta_n}$ for $n^{th}$ RIS-sensor. 
To be specific, $\mathbf{\Lambda}$ controls steering orientation of the passive beamforming to exploit the maximum directivity gain at the UE as shown in Fig. \ref{beam1} for a 2D space. In Fig. \ref{beam1}, $\Omega$, $\varphi$, and $\psi$ correspond to the boresight angles of the BS's beam, RIS's beam, and angle of arrival (AoA) of the signal at the UE, respectively in a 2D space. Therefore, to maximize the directivity gain at the UE, the passive beamfoming coefficients in $\mathbf{\Lambda}$ should be carefully tuned to satisfy
\begin{equation}
    \varphi = \pi + \Omega - \psi.
\end{equation}
Consequently, the dispersed received signals by the RIS wall sensors are converged together as a beam to reach the single UE's antenna. Thanks to this passive beamforming, the peak ERP of the refracted signal scales up by $N^2$. The array gain is assumed to be constant for all angles.
Note that the chipless RFID sensors are passive elements and do not amplify the power of the signal by themselves. They only adjust the phase of the signals to tune the radiation pattern in another boresight angle. Thus, the thermal noise is neglected in the impinging signal, since, ideally,
sensors do not need analog-to-digital/digital-to-analog converters, and power ampliﬁers \cite{ertrul1}.
However, there is an attenuation matrix of $\mathbf{B}=[\text{diag}([B_0,\cdots,B_{N-1}])]$ where $B_n\in [0\hspace{2mm}1)$ is the attenuation factor for the $n^{th}$ sensor. The attenuation factors can be measured once for all sensors or even can be controlled by the main RIS controller in more advanced RIS-sensor designs \cite{towards} for absorption purposes if needed. 
It is also noteworthy that each bank of delay lines in RIS-sensors only takes a finite number of discrete values due to implementation constraints \cite{nemati2020ris}. 
Eventually, the adapted signals at RIS-sensors for transition to the indoor space can be expressed as a vector given by
\begin{equation}
    \mathbf{y}=\mathbf{\Lambda} \,\mathbf{B} \sqrt{L}\, \mathbf{H} \, x(t).
    \label{y}
\end{equation}

       \vspace{-4mm}
    \subsection{Stage 4: RIS-UE Link}

    In indoor space, the large-scale pathloss is assumed to be negligible due to the regular short distances. Moreover, the RIS wavefront controller is supposed to have a passive beamforming codebook \cite{towards} to sweep its refracted beamforming coefficients from the pre-designed codebook and select the best beam based on the UE received training signal power. More advanced wavefront controller as a cognitive engine with machine learning capabilities can provide a wide range of flexibility in  intelligent beamforming. As discussed in \cite{towards}, the RIS can maintain a database that records the optimal beams for different indoor UE locations in the past and serve a new UE whose location is available. The RIS can leverage its database to efﬁciently ﬁnd an initial set of indoor beamforming coefﬁcients by using machine learning based methods \cite{towards}. 
    
    The indoor space is assumed to be a low mobility environment with random blockages. Specifically, the  RIS and reflecting/scattering objects are assumed to be stationary but blockages are random. It is justifiable with the fact that the random obstacles are assumed as perfect blockers without reflection/diffraction.
     Eventually, the refracted signal rays travel through $N$ paths with deterministic baseband equivalent channel vector of $\mathbf{g}=[g_0,\cdots,g_{N-1}]^T]$ but with indoor blockage vector of $\mathbf{b}^\text{in}=[b^\text{in}_0,\cdots,b^\text{in}_{N-1}]^T$, $b^\text{in}_n\in \{0,1\}$ and reach the single UE antenna as shown in Fig. \ref{fig:RiS} (a).
     Therefore, the RIS-UE link can be represented by vector $\mathbf{q}$ given by
     \begin{equation}
         \mathbf{q}= \mathbf{b}^\text{in} \odot \mathbf{g},
         \label{G}
     \end{equation}
    and from \eqref{trx1}\eqref{y}, the received signal at the UE with the background noise $w(t)\sim\mathcal{CN}(0,\sigma_{w}^2)$ can be written as
    \begin{align}
        z(t)&=\mathbf{q}^T\, \mathbf{y}+w(t)\nonumber\\
        &= \mathbf{q}^T\, \mathbf{\Lambda} \,\mathbf{B}  \sqrt{L}\, \mathbf{H} \, \mathbf{w}_b s(t)+ w(t).
        \label{eq21}
    \end{align}
    It is noteworthy that the deterministic independent \textit{Nakagami-m} small-scale fading is assumed for both outdoor and indoor links to easily adapt the different degree of fading by changing the value of \textit{m}-parameter. We assume \textit{m}-parameter is a positive large integer number to approximate the small-variance fading due to the nature of LoS links \cite{Cov&rate}. Thus, $|h|^2$ and $|g|^2$ are normalized Gamma deterministic variables.  Additionally, profiting from smart beamforming at the RIS, the delay of paths at the UE is negligible and the impact of frequency-selective fading can be neglected using advanced OFDM numerology design \cite{Licis} or frequency domain equalization techniques \cite{Cov&rate}. 

          Consequently, considering normalized message power, i.e., $\mathbb{E}\left\{|s(t)|^2\right\}=1$, and the RIS phase compensation, i.e., $|h_n e^{j\Delta_n} g_n|^2=|h_n|^2|g_n|^2$, the SNR at the UE becomes
    \begin{equation}
        \Gamma =\frac{M^2L\sum\limits_{n=0}^{N-1}B_nG_b(\phi_n,\theta_n)|h_n|^2|g_n|^2|(b^\text{out}_{n}\, b^\text{in}_{n})^2}{\sigma_w^2}.
        \label{rt2}
    \end{equation}

   \subsection{Stage 5: Channel State Information} It is worthwhile to note that since random obstacles are assumed to block a link without diffraction or reflection in a static environment, the composite channel coefficients of $h_n g_n$ for all $N$ paths can be estimated\footnote{Using the ﬁxed pilot pattern in LTE standard for slow fading channel
estimation in \cite{nutshel}.} once to create a pre-designed passive beamforming codebook at the RIS controller accordingly. Then, the RIS wall with its sensors profits from a simple passive AoA localization \cite{AOA,PuOB} to find out the direction of the UE's training signal and select the best beam for the intended indoor UE and subsequently perform a phase-shift adjustment at its sensors. In other words, given beam sweeping measurements, the beamforming vector can be chosen from the pre-designed beam sweeping codebook.   

   \subsection{Stage 6: Double Pathloss Effect of RIS Deployment} In a generic RIS deployment scenario, due to the passive nature of the RIS-sensors, the signal suffers from double large-scale pathloss; while profiting from the passive beamforming at the RIS which causes the power of the refracted beam scales up by $N^2$. Thus, the received power at the UE, denoted by $P_u$, follows 
\begin{equation}
    P_u \propto \frac{N^2 }{ d_\text{out}^{\alpha}d_\text{in}^{\alpha}},
\end{equation}
where $d_i$, $i\in\{\text{out,in}\}$ stands for BS-RIS and RIS-UE distances.
As observed in \cite[Fig. 21]{wu2020}, placing the RIS near the UE yields the highest received power and minimizes the double pathloss effect; while placing it around the middle between the UE
and BS (usually optimal in the case of an active relay instead of RIS) leads to the smallest received power. Intuitively speaking, in our proposed RIS-assisted O2I communication, the RIS-wall is usually very close to the UE compared the BS, i.e., $d_\text{in}\ll d_\text{out}$, which minimizes the double pathloss effect. Additionally, due to the short distances in the indoor space, the indoor large-scale pathloss is supposed to be negligible.
%
It is worth mentioning that minimization of the double pathloss effect is not an additional assumption, but the beauty of our key proposed RIS-assisted O2I communication model. 
   
In the following section, we explain the principles of the blockage model in more details.

    \section{Blockage Model for O2I Communication Deployment}
    Blockage results from three natural phenomena such as static, dynamic, and self blockage \cite{MOBLCEL,MOBLCEL-1,MOBLCEL-2} as shown in Figs. \ref{fig:system} and \ref{fig:beam}:
    \begin{itemize}
    \item Static blockage is a result of stationary objects like buildings in the communication area;
    \item Dynamic blockage is due to mobile objects like humans;
    \item Self blockage is caused by user's body orientation which can block a link. 
    \end{itemize}
       \begin{figure}[t]
            \centering
            \includegraphics[width=7cm, height=4cm]{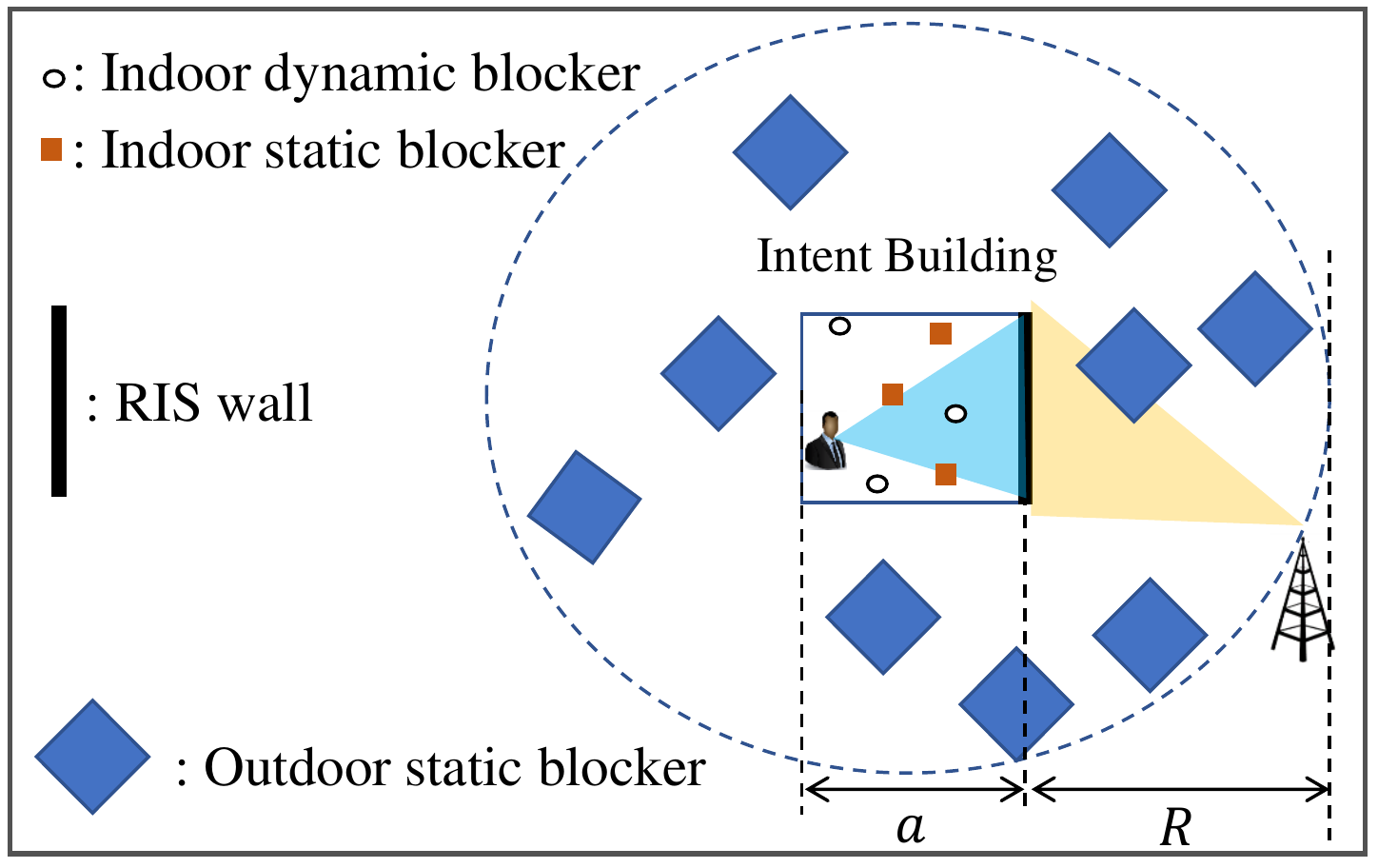}
                \caption{Depiction of UE, BS, and blockers.}
            \label{fig:beam}
    \end{figure}
    As shown in Fig. \ref{fig:system}, we assume that the outdoor BS is at high altitude and free of the outdoor dynamic blockage effect caused by low altitude mobile objects like humans and cars. Therefore, it is a feasible to assume only static blockage in the outdoor BS-RIS link.
     On the other hand, in the indoor area, both static and dynamic blockages are likely to happen. Here, 
     let $\Phi_\text{st}^\text{out}$, $\Phi_\text{st}^\text{in}$, and $\Phi_\text{dy}^\text{in}$ denote the distributions of outdoor static, indoor static, and indoor dynamic blockers, respectively. These three distributions of blockers are assumed to be independent homogeneous Poisson Point Processes (PPP).
    Furthermore, $\lambda^\text{out}_\text{st}\left[\frac{\text{bl}}{\text{km}^2}\right]$, $\lambda^\text{in}_\text{st}\left[\frac{\text{bl}}{\text{m}^2}\right]$, and $\lambda^\text{in}_\text{dy}\left[\frac{\text{bl}}{\text{m}^2}\right]$ are blockers intensities in $\Phi_\text{st}^\text{out}$, $\Phi_\text{st}^\text{in}$, and $\Phi_\text{dy}^\text{in}$, respectively.
    Thus, the probabilities of having $m^i_{j}$ blockers in an area of $A_i$ is
        \begin{align}
          \Pr[m^i_{j}]=\frac{\left[\lambda^i_{j} A_i\right]^{m^i_{j}}}{m^i_{j}!}e^{-\lambda^i_{j} A_i},
    \label{er1}
    \end{align}
    %
    where index of $i\in\{\text{out},\text{in}\}$ stands for outdoor/indoor spaces. Additionally, index of $j\in\{\text{st},\text{dy}\}$ indicates the type of blockers, i.e., $\text{st}$: static, $\text{dy}$: dynamic.
    For instance, in Fig. \ref{fig:beam}, the outdoor and indoor areas are $A_\text{out}=\pi \left(R+\frac{a}{2}\right)^2-a^2$, and $A_\text{in}=a^2$, respectively; where $a$ is the square side of the intent building.
   Moreover, the outdoor and indoor blockage factor for $n^\text{th}$ link can be modeled as a Bernoulli random variable (r.v.) which is given by
 \begin{equation}
        b^i_{n}=\left\{\begin{array}{lll}
            0, & {\rm w.p.} \ p_{i,n} & \text{{\small (blockage)}} \\
            1, & {\rm w.p.}\ 1-p_{i,n} & \text{{\small (no blockage)}}
        \end{array} \right. ,
    \end{equation}     
    where $p_{i,n}$, $i\in\{\text{out, in}\}$ denotes the blockage probability for $n^{th}$ link.
    \subsection{Outdoor Blockage Model}
    The outdoor blockage is a result of outdoor static blockers. For instance, buildings and trees are such static blockers in the outdoor area. Note that between the BS and the RIS, there is no self-blockage and low altitude mobile objects like cars and humans are neglected due to relatively high altitude communication. Therefore, $p_{\text{out},n}$ is given by
    \begin{equation}
        p_{\text{out},n}=1-\Pr[b_{n}^\text{out}\stackrel{\text{static}}{=}1].
        \label{ob1}
    \end{equation}
    We assume that the static blockers are located randomly represented by the process of random rectangles in \cite{BlokPro,Cov&rate}. 
    Subsequently, LoS probability of $\Pr[b^\text{out}_{n}\stackrel{\text{static}}{=}1]$, using the void probability in Poisson process \cite{BlokPro}, is given as follows 
    \begin{equation}
     \mathcal{P}_{\text{st},n}^\text{out}=   \Pr[b^\text{out}_{n}\stackrel{\text{static}}{=}1]= e^ {-\eta_1(\kappa_1\mathpzc{R}_{1,n}+\upsilon_1)},
        \label{ew3}
    \end{equation}
    where $\kappa_{1}=\frac{2\lambda^\text{out}_\text{st}}{\pi}\left(\mathbb{E}\{\mathcal{L}\}+\mathbb{E}\{\mathcal{W}\}\right)$ and $\upsilon_1=\lambda^\text{out}_{st}\mathbb{E}\{\mathcal{L}\}$ $\mathbb{E}\{\mathcal{W}\}$ in which $\mathbb{E}\{\mathcal{L}\}$ and $\mathbb{E}\{\mathcal{W}\}$ are the average of length and width of the static blockers. $\mathpzc{R}_{1,n}$ is the 2D distance of BS from $n$th RIS-sensor.
    $\eta_1$ is a constant scaling factor incorporating the height of blockers given in \cite[Eq. (7)]{BlokPro}. In other words, it denotes conditional probability that the static blocker crossing a link has enough height to block the link.

    \subsection{Indoor Blockage}
    \begin{figure}[t]
            \centering
            \includegraphics[width=6cm, height=2cm]{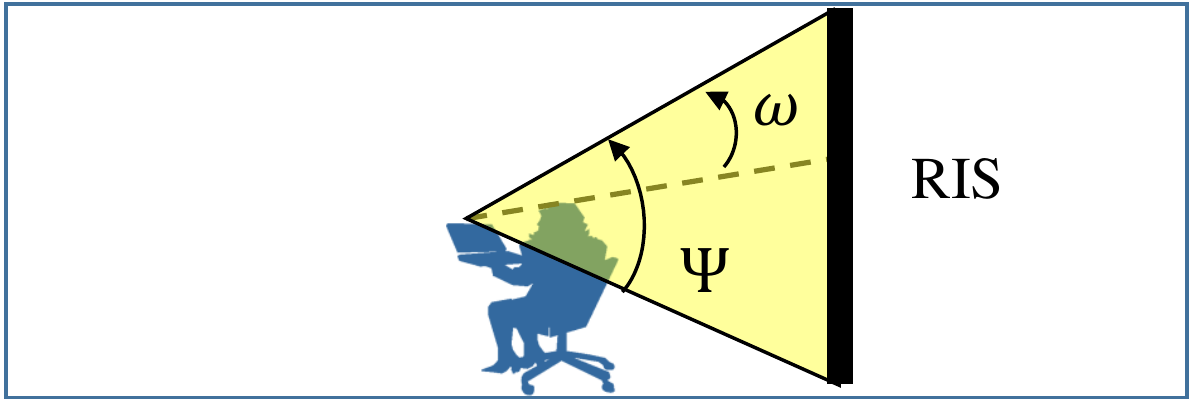}
                \caption{Illustration of self blockage.}
            \label{fig:self}
    \end{figure}
    The indoor blockage is a result of static, dynamic and self blockages. Static blockers are such indoor fixed obstacles, e.g., desks, and dynamic blockers are like other people moving around. Following \cite{MOBLCEL} and \eqref{ob1}, $p_{\text{in},n}$ can be given by
    \begin{equation}
         p_{\text{in},n}=1-\Pr[b^\text{in}_{n}\stackrel{\text{self}}{=}1]\Pr[b^\text{in}_{n}\stackrel{\text{static}}{=}1]\Pr[b^\text{in}_{n}\stackrel{\text{dynamic}}{=}1]. 
    \end{equation}
    As shown in Fig. \ref{fig:self}, $\Pr[b_{n}^\text{in}\stackrel{\text{self}}{=}1]$ becomes
    \begin{equation}
      \mathcal{P}_{\text{self},n}^\text{in}= \Pr[b_{n}^\text{in}\stackrel{\text{self}}{=}1]=\left(\frac{\omega}{\Psi}\right),
    \end{equation}
    where $\omega \in [0,\quad \Psi]$ is the angle of user's body in which signal is not blocked.
    In addition, similar to the outdoor static blockage explained in previous subsection, the LoS probability of $\Pr[b^\text{in}_{n}\stackrel{\text{static}}{=}1]$ becomes 
    \begin{equation}
        \mathcal{P}_{\text{st},n}^\text{in}=\Pr[b^\text{in}_{n}\stackrel{\text{static}}{=}1]= e^ {-\eta_2(\kappa_2\mathpzc{R}_{2,n}+\upsilon_2)},
        \label{ew5}
    \end{equation}    
    where definitions of $\eta_2$, $\kappa_2$ and $\upsilon_2$ are the same as those of $\eta_1$, $\kappa_1$ and $\upsilon_1$, respectively, but with respect to the indoor static blocker parameters. In \eqref{ew5}, $\mathpzc{R}_{2,n}$ denotes the 2D distance between $n^\text{th}$ RIS-sensor and the UE's antenna.
        Moreover, there exist dynamic blockers that move in random directions with  the average speed of $V_\text{in}$ and may cross the communication links and block them. As observed in \cite{MOBLCEL-1}, the arrival of dynamic blockers at the $n^\text{th}$ link has Poisson distribution with intensity of $\beta^\text{in}_{n}$ (bl/sec) and the blockage duration has exponential distribution with mean $\frac{1}{\mu^\text{in}}$ (sec). Therefore, the average number of dynamic blockers that block $n^\text{th}$ link at same time is $\frac{\beta^\text{in}_{n}}{\mu^\text{in}}$ (bl).
Particularly, $\beta^\text{in}_{n}$ (bl/sec) and $\mu^\text{out}$ (1/sec) are associated with average blocked and unblocked rates, respectively \cite{MOBLCEL}. 
    Consequently, following \cite[Eq. (5)]{MOBLCEL-2}, LoS probability of $\Pr[b^\text{in}_{n}\stackrel{\text{dynamic}}{=}1]$ becomes
    \begin{equation}
     \mathcal{P}_{\text{dy,n}}^\text{in}=  
     \Pr[b^\text{in}_{n}\stackrel{\text{dynamic}}{=}1]=\frac{\mu^\text{in}}{\beta^\text{in}_{n}+\mu_\text{in}},
     %
        \label{ew2}
    \end{equation}
    where $\beta^\text{in}_{n}$ is given by 
    \begin{equation}
        \beta^\text{in}_{n}=
             \frac{2}{\pi} \lambda^\text{in}_\text{dy} V_\text{in} \frac{H^\text{in}_\text{bl}-H_\text{ue}}{{H}_n-H_\text{ue}} \mathpzc{R}_{2,n},
             \label{bet2}
    \end{equation}
     In above expression, $H^\text{in}_\text{bl}$, ${H}_{n}$, and $H_\text{ue}$ are the heights of the indoor dynamic blockers, $n^\text{th}$ RIS-sensor, and the UE, respectively. 

   \begin{figure}[t]
            \centering
            \includegraphics[width=1\linewidth, height=6cm]{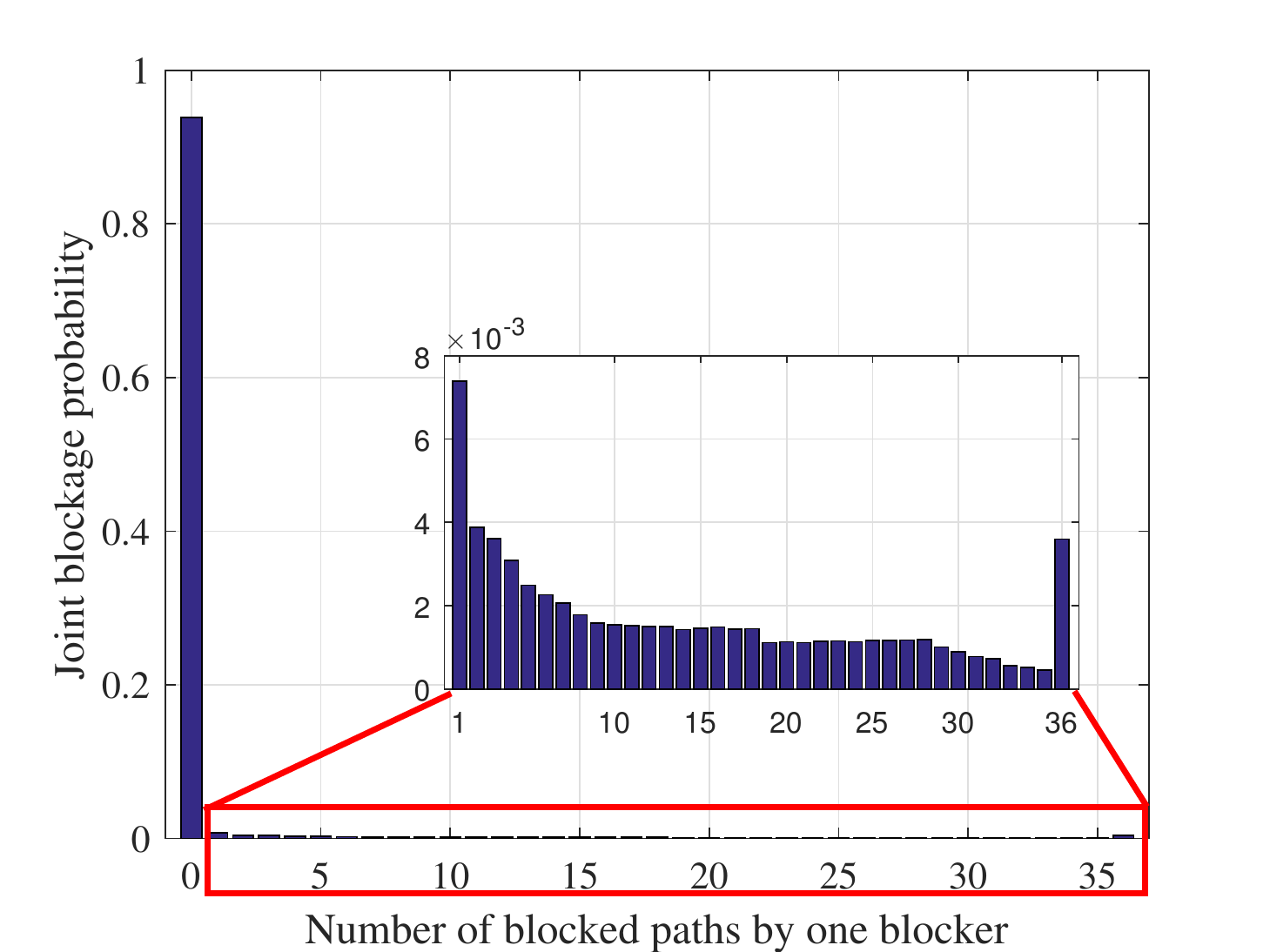}
                \caption{Joint blockage probability of outdoor static blockage effect where $N=36$, $\eta_1=0.5$, $\mathbb{E}\{\mathcal{W}\}=\mathbb{E}\{\mathcal{L}\}=10$m, $\lambda_\text{st}^\text{out}=25\frac{\text{bl}}{\text{km}^2}$, and $R=60$m.}
            \label{fig:corel_0}
    \end{figure}    
      
    \subsection{Towards Sophisticated Blockage Model}

           \begin{figure}[t]
            \centering
            \includegraphics[width=7cm, height=2.7cm]{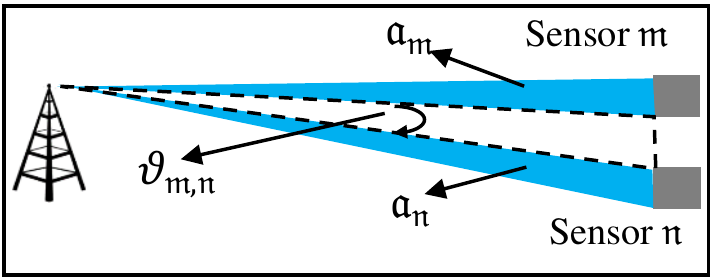}
                \caption{Blockage regions for paths $\mathfrak{n}$ and $\mathfrak{m}$.}
            \label{fig:corel}
    \end{figure}

    Our earlier analyses of LoS or blockage probabilities presumed independence between different links which has been one of the basic assumption in most of the existing works~\cite{BlokPro, Cov&rate, MOBLCEL-1,MOBLCEL,MOBLCEL-2}. However, the probabilities for different links are not independent in real networks and there are potential correlations of blockage effects between links specially when there exists a large static blocker. This correlation is effective here where there are a large number of co-located RIS-sensors along with a single BS. For instance, a group of the paths might be blocked by a large obstacle simultaneously. 
    Fig. \ref{fig:corel_0} shows a conducted numerical result using MATLAB for a specific scenario that indicates the joint blockage probabilities 
    of the paths are in range of $\sim 10^{-3}$ for the square array of RIS-sensors with minimal distance of $1$m between them. 
    As illustrated for $\mathfrak{m}^\text{th}$ and $\mathfrak{n}^\text{th}$ sensors in Fig.~\ref{fig:corel}, $\mathfrak{m}\neq\mathfrak{n}$, when the separation angle between the paths, i.e., $\vartheta_{\mathfrak{m},\mathfrak{n}}$, increases, the probability that the paths being blocked with the same blocker decreases.
    Note that a perfect blockage in $\mathfrak{n}^\text{th}$ path happens when the blocker obstructs the area of $\mathfrak{a_n}$.
     Intuitively speaking, the separation angle between paths justifies the joint probability reduction in Fig. \ref{fig:corel_0} when the number of jointly blocked paths increases as well as the separation angle between them increases.
       In Fig. \ref{fig:corel_0}, the jump in the joint blockage probability when the whole $N=36$ paths are blocked is because of the fact that separation angle is zero for all paths. 
    It explains that a more sophisticated blockage model with multiple channel states would require for accurate tracking of the evolution of the channel state for each path and their correlation analysis\footnote{For tracking of the evolution of the channel state for each path and their correlation analysis, a multidimensional Markovian approach would need to be developed in future work.}. %
    Such a correlation between the blockage of different paths can be approximated and considered partially as in~\cite{gupta2017macrodiversity, corel}. Along the similar lines of~\cite{UrbanCor,gupta2017macrodiversity, corel}, we can consider a  set  of $N$ paths  and  take  into  account  the  correlation  of the blockage across paths.
    Here, each path can have $2^N-N-1$ correlation coefficients with other paths, i.e., proper subsets of $\{1,\cdots,N\}$ excluding $N$ single-element subsets. 
    For instance, let us assume the rectangular height for the static blockage, i.e., a 3D object, along with its orientation angle, denoted by $\Theta$, in a second-order-diversity outdoor static blockage model. Then, following the analysis in \cite{gupta2017macrodiversity} and using void probability in Poisson process, we have
      \begin{align}
      \mathbb{E}\{b^\text{out}_\mathfrak{m} b^\text{out}_\mathfrak{n}\}
      &=
        \Pr\big[b^\text{out}_\mathfrak{m}\stackrel{\text{static}}{=}1, b^\text{out}_\mathfrak{n}\stackrel{\text{static}}{=}1\big] \nonumber\\
        &=
        \exp\left(- \lambda_\text{st}^\text{out}  \int_{\Theta}\int_{ (\mathbb{R}^+)^3}
        \left(\mathfrak{a}_\mathfrak{m}  \cup \mathfrak{a}_\mathfrak{n}\right)\right),
        \label{cor_0}
    \end{align}
    which is equivalent to the joint LoS probability when all blockages are outside the shaded area in Fig. \ref{fig:corel}.  
    %

    Let $\rho_\text{st}^\text{out}(\mathfrak{m},\mathfrak{n})$ denote the second-order-diversity outdoor static LoS correlation coefficient between paths $\mathfrak{m}$ and 
    $\mathfrak{n}$, ($\mathfrak{m},\mathfrak{n}=0,\cdots,N-1,\,\mathfrak{m}\neq 
    \mathfrak{n}$), which can be estimated as 
    \begin{align}
        \rho_\text{st}^\text{out}(\mathfrak{m},\mathfrak{n}) &=\frac{\mathbb{E}\{b^\text{out}_\mathfrak{m}b^\text{out}_\mathfrak{n}\}-\mathbb{E}\{b^\text{out}_\mathfrak{m}\}\mathbb{E}\{b^\text{out}_\mathfrak{n}\}}{\sqrt{\sigma_{\mathfrak{m}}^2\sigma_{\mathfrak{n}}^2}}
        \nonumber\\
        &=
        \frac{\Pr[b^\text{out}_\mathfrak{m}\stackrel{\text{static}}{=}1,b^\text{out}_\mathfrak{n}\stackrel{\text{static}}{=}1]-
        \mathcal{P}^\text{out}_{\text{st},\mathfrak{m}}
        \mathcal{P}^\text{out}_{\text{st},\mathfrak{n}}}
        {\sqrt{\mathcal{P}^\text{out}_{\text{st},\mathfrak{m}}
        \mathcal{P}^\text{out}_{\text{st},\mathfrak{n}}
        (1-\mathcal{P}^\text{out}_{\text{st},\mathfrak{m}})
        (1-\mathcal{P}^\text{out}_{\text{st},\mathfrak{n}})}}
        \label{cor_1}.
    \end{align}
     However, \eqref{cor_0} and \eqref{cor_1} 
    are for the second-order statistics; 
    while for an accurate analysis, higher order statistics 
    need to be derived which are difficult to obtain. 
    Hence, Fig. \ref{ProdDis0} depicts the numerical results of end-to-end blockage probabilities 
    with/without the blockage correlation on blockage. 
    It is shown that the gap between the correlated and independent 
    blockage models increases when 
    either end-to-end distance or blockage density increases. 
    Accordingly, the following remark concludes this subsection.

      \begin{figure}[t]
            \centering
            \includegraphics[width=1\linewidth, height=6cm]{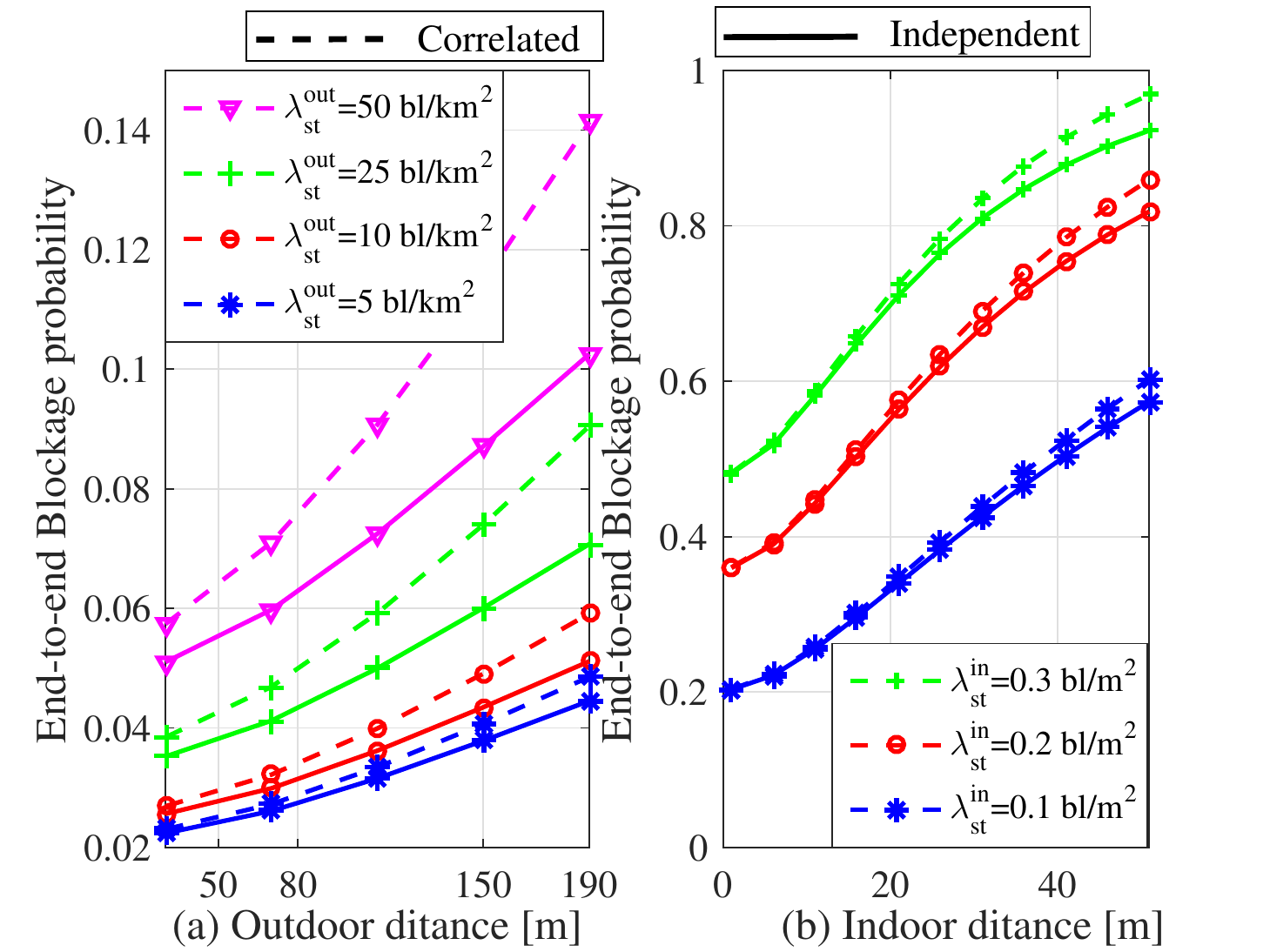}
                \caption{Evaluation of the blockage with respect to the distance. 
         \textemdash ~Outdoor parameters: 
         $\mathbb{E}\{\mathcal{W}\}=\mathbb{E}\{\mathcal{L}\}=10$m, $\eta_1=0.5$,  
          ~Indoor parameters: $H^\text{in}_\text{bl}=2$m, $\mathbb{E}\{\mathcal{W}\}=\mathbb{E}\{\mathcal{L}\}=0.5$m, $\eta_2=0.25$, $\lambda_\text{dy}^\text{in}=0.1\frac{\text{bl}}{\text{m}^2}$, $\mu_\text{in}=1$s, $V_\text{in}=0.5\frac{\text{m}}{\text{s}}$, $N=36$.} 
         \label{ProdDis0}
    \end{figure}    
     \textit{Remark 1:} As numerically shown in Fig. \ref{ProdDis0}, when the distance between two ends increases, the impact of blockage correlation on the blockage probability becomes larger.
    This is because of the fact that blockage probability, joint blockage probability, and blockage correlation increase when the distance increases. Intuitively speaking, when the end-to-end distance increases, e.g., $R\rightarrow \infty$, a narrower beamforming is required and therefore, all highly correlated $N$ paths can be treated as one single path.
     However, in smaller distances shown in Fig. \ref{ProdDis0}, i.e., $\sim80$m for outdoor and $\sim20$m for indoor, the blockage correlation model approaches the independent blockage model (preferably for small $\lambda_\text{st}^\text{out}$).
      Moreover, with these reasonably short distances, the correlation between paths becomes negligible when size of blockers are relatively small compared to the end-to-end points \cite{BlokPro,Cov&rate} and ignoring the correlation of shadowing between links causes minor loss in accuracy.
    
    Without loss of generality, throughout this paper, for mathematical tractability and similar to the works in~\cite{BlokPro, Cov&rate, MOBLCEL-1,MOBLCEL,MOBLCEL-2}, we assume that the correlation is negligible in the aforementioned short distances (as it does not affect the conclusions of this study). Moreover, further investigation of blockage correlation effect is deferred to the future work.
    Besides, we assume that the size of blockers is small enough that the blockage correlation effect of them is negligible.

    \section{SNR Coverage Analysis}
    The SNR coverage probability is the probability that the received SNR is larger than a threshold. Let $T$ denote the threshold, then from \eqref{rt2}, the SNR coverage probability becomes
    \begin{equation}
    \Pr\left[\Gamma>T\right]=1-\Pr\{\Gamma\leqslant T\}.
    \end{equation}
     It is equivalent to the complementary cumulative distribution function (CCDF) of SNR. In \eqref{rt2}, for notational simplicity, let $Z_n=b^\text{out}_{n}b^\text{in}_{n}$, $\mathcal{A}_n=B_nG_b(\phi_n,\theta_n)|h_n|^2|g_n|^2$, 
     and $\mathcal{G}=\frac{M^2 L}{\sigma_{w}^2}$. Since $Z_n\in\{0,\,1\}$, $Z_n^2=Z_n$ and \eqref{rt2} can be re-expressed as 
    \begin{equation}
        \Gamma =\mathcal{G}\sum_{n=0}^{N-1}\mathcal{A}_nZ_n.
        \label{rt22}
    \end{equation}
    Then, the probability of blockage for the $n^{th}$ end-to-end path between the BS and UE becomes
    \begin{align}
        \mathfrak{p}_n&=\Pr\{Z_n=0\}=1-\Pr\{Z_n=1\}\nonumber\\
        &=1-\left[1-p_{\text{out},n}\right] \left[1-p_{\text{in},n}\right]\nonumber \\
        &=p_{\text{out},n}+p_{\text{in},n}-p_{\text{out},n}p_{\text{in},n}.
        \label{eqpn}
    \end{align}
    With respect to \textit{Remark 1}, the blockage distribution of each equivalent link ($Z_n$) is defined as independent Bernoulli r.v. with parameter $\mathfrak{p}_n$ given in (\ref{eqpn}). Therefore, $\Gamma$ has a weighted sum of independent Bernoulli trail. 
    Since the probability mass function (PMF) of the sum of weighted Bernoulli r.v.'s is complicated \cite{WeiBook,sumind}, in order to find a closed-form expression for the SNR coverage probability, we consider two different approximations as follows.
    
    \subsection{\textbf{Approximation-\rom{1}}}
    In this approximation, we consider a wide RIS-wall containing a large number of sensors. Therefore, we state the following proposition.
    
    \begin{proposition}
    The O2I SNR coverage probability when the RIS-wall contains a large number of sensors is given by 
    \begin{equation}
         \Pr\left[\Gamma>T\right]=\mathcal{Q}\left(\frac{T-\mathcal{M}}{\sigma_z}\right),
    \end{equation}
    where $\mathcal{M}$ and $\sigma^2_z$ are the mean and variance of $\Gamma$ given as
      \begin{align}
            &\mathbb{E}\left\{\Gamma\right\}=\mathcal{G}\sum_{n=0}^{N-1}\mathcal{A}_n(1-\mathfrak{p}_n)=\mathcal{M},\\
            &
            Var\{\Gamma\}=\mathcal{G}^2\sum_{n=0}^{N-1}\mathcal{A}_n^2(1-\mathfrak{p}_n)\mathfrak{p}_n=\sigma_z^2.
        \end{align}
    \end{proposition}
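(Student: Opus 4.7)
The plan is to apply a Central Limit Theorem (CLT) argument, exploiting the hypothesis that $N$ is large, to approximate the distribution of the weighted Bernoulli sum in \eqref{rt22} by a Gaussian. Once the distribution is Gaussian with known mean and variance, the SNR coverage probability immediately reduces to a Q-function evaluation, which is exactly the statement to be proved.

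First, I would compute the mean and variance directly from \eqref{rt22}. Since $\mathcal{G}$ and the $\mathcal{A}_n$'s are deterministic, and $Z_n$ is a Bernoulli r.v.\ with $\Pr\{Z_n=1\}=1-\mathfrak{p}_n$, linearity of expectation gives $\mathbb{E}\{\Gamma\}=\mathcal{G}\sum_{n=0}^{N-1}\mathcal{A}_n(1-\mathfrak{p}_n)=\mathcal{M}$. By \emph{Remark 1}, the $Z_n$'s are treated as independent, so the variance distributes over the sum and each term contributes $\mathcal{G}^2\mathcal{A}_n^2\,\mathrm{Var}\{Z_n\}=\mathcal{G}^2\mathcal{A}_n^2(1-\mathfrak{p}_n)\mathfrak{p}_n$, yielding $\sigma_z^2$. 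These two identities are essentially algebraic and require no further machinery.

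Second, I would invoke the CLT for independent but \emph{non-identically} distributed summands (Lindeberg--Feller form), which is the appropriate tool since the weights $\mathcal{A}_n$ and parameters $\mathfrak{p}_n$ vary across sensors. For large $N$, provided no single sensor's contribution $\mathcal{G}\mathcal{A}_n Z_n$ dominates the sum (a mild regularity condition that holds since the $\mathcal{A}_n$ are bounded and each $\mathfrak{p}_n\in[0,1]$), the standardized sum $(\Gamma-\mathcal{M})/\sigma_z$ converges in distribution to a standard normal. Hence
\begin{equation}
\Pr[\Gamma>T]=\Pr\!\left[\frac{\Gamma-\mathcal{M}}{\sigma_z}>\frac{T-\mathcal{M}}{\sigma_z}\right]\approx \mathcal{Q}\!\left(\frac{T-\mathcal{M}}{\sigma_z}\right),
\end{equation}
which matches the claimed expression using the definition of the Q-function from the \emph{Notation} paragraph.

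The main obstacle (really the only subtle point) is the CLT justification: a weighted sum of heterogeneous Bernoulli trials does not converge to a Gaussian in general, so one must either verify the Lindeberg condition or lean on the fact that ``$N$ large'' is explicitly part of the proposition's hypothesis and use the CLT as an approximation (consistent with the word ``approximation'' in the subsection title \textbf{Approximation-\rom{1}}). I would therefore present the Gaussian limit as the governing asymptotic behavior, note that the approximation's accuracy improves with $N$ and degrades if the weight distribution $\{\mathcal{A}_n^2(1-\mathfrak{p}_n)\mathfrak{p}_n\}$ becomes too skewed (e.g., a few $\mathfrak{p}_n\to 0$ or $1$), and defer the quantitative error control to the numerical validation section. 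Everything else—the mean/variance computation and the Q-function rearrangement—is routine.
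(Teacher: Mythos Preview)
Your proposal is correct and follows essentially the same approach as the paper: compute the mean and variance of the weighted Bernoulli sum in \eqref{rt22}, invoke the central limit theorem for large $N$ to approximate $\Gamma$ by a Gaussian, and read off the tail probability as a Q-function. Your write-up is in fact more careful than the paper's (you explicitly justify the mean/variance via independence from \emph{Remark~1} and flag the Lindeberg-type regularity needed for non-identically distributed summands), but the underlying argument is identical.
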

    
\begin{proof}
In the literature, the RIS is mostly known as large intelligent surfaces with a large number of elements on it \cite{Ertrul6G,wcnc}. 
Therefore, if the number of elements is large enough, profiting from central limit theorem, 
the PMF of $\Gamma$ can be approximated by a Gaussian distribution function with mean $\mathcal{M}$ and variance of $\sigma_z^2$ as follows.
    \begin{equation}
        f_\Gamma(T)=\frac{1}{\sqrt{2\pi\sigma_z^2}}\exp\left[-\frac{(T-\mathcal{M})^2}{2\sigma_z^2}\right].
    \end{equation}
    Eventually, the coverage probability becomes 
    \begin{equation}
        \Pr\left[\Gamma>T\right]=\mathcal{Q}\left(\frac{T-\mathcal{M}}{\sigma_z}\right).\nonumber
    \end{equation}
\end{proof}

The accuracy of this approximation is quiet satisfactory for large RISs since the weights of $\mathcal{A}_n$ hold specific values with small variance 
\cite{sumind}. Fig. \ref{ProdDis1} shows that this PMF approximation coincides the simulation results for large RISs.

 However, the accuracy of \textbf{Approximation-\rom{1}} becomes poor for a small $N$. In such a rare case where $N$ is small, e.g., $N<20$, we are able to find out the outage probability using numerical techniques to take into account all combinations. 
    \subsection{\textbf{Approximation-\rom{2}}}
    In this approximation, we restrict $\mathcal{A}_n$ in \eqref{rt22} to take an average of SNR coverage probability, specially when $N$ is small and \textbf{Approximation-\rom{1}} might not be applicable. Therefore, we state the following proposition.
    
    \begin{proposition}
    The average O2I  SNR  coverage  probability  when  $\mathcal{A}_n$ in \eqref{rt22} is restricted to be fixed for all $N$ paths, i.e., $\forall n \quad \rightarrow  \mathcal{A}_n\approx\mathcal{A}=\mathbb{E}\{\mathcal{A}_n\}$ is approximated by
     \begin{align}
         &\Pr\left[\Gamma_\emph{{\rom{3}}}>T|T=k\mathcal{GA}\right]>\nonumber\\
         & 1- \sum_{q=0}^k \frac{1}{N+1}\sum_{\ell=0}^N \mathcal{C}^{-k\ell} \prod_{n=1}^N\left[ 1+(\mathcal{C}^\ell-1)(1-\mathfrak{p}_n)\right],
    \end{align}
     where
     \begin{align}
     \Gamma_\emph{\text{\rom{3}}} =\mathcal{GA}\sum_{n=0}^{N-1}Z_n
         \hspace{0.5cm}\text{and}\hspace{.5cm}
         \mathcal{C}=\exp \left[\frac{2j\pi}{N+1}\right]
          .
        \label{rt23}
     \end{align}
     
    \end{proposition}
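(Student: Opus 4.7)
The plan is to reduce the problem, after the simplification $\mathcal{A}_n\approx\mathcal{A}$, to computing the tail of a sum of independent but non-identically-distributed Bernoulli random variables, and then invert the probability generating function (PGF) of that sum through a roots-of-unity filter to get a closed-form PMF.

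First I would set $Y=\sum_{n=0}^{N-1}Z_n$ so that $\Gamma_{\text{III}}=\mathcal{GA}\,Y$, and therefore
\begin{equation}
\Pr[\Gamma_{\text{III}}>T\mid T=k\mathcal{GA}]=\Pr[Y>k]=1-\sum_{q=0}^{k}\Pr[Y=q].
\end{equation}
Because the $Z_n$ are independent Bernoulli with $\Pr\{Z_n=1\}=1-\mathfrak{p}_n$, the variable $Y$ follows a Poisson-binomial law supported on $\{0,1,\dots,N\}$, with PGF
\begin{equation}
G_Y(z)=\prod_{n=1}^{N}\bigl[\mathfrak{p}_n+(1-\mathfrak{p}_n)z\bigr]=\prod_{n=1}^{N}\bigl[1+(z-1)(1-\mathfrak{p}_n)\bigr].
\end{equation}

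Next I would invert this PGF to obtain the PMF. Since $Y$ is integer-valued with support of size $N+1$, I can use the discrete Fourier transform on the cyclic group of order $N+1$: evaluating $G_Y$ at the $(N+1)$-th roots of unity $\mathcal{C}^{\ell}=\exp[2j\pi\ell/(N+1)]$ and applying the standard orthogonality relation $\frac{1}{N+1}\sum_{\ell=0}^{N}\mathcal{C}^{(r-q)\ell}=\indicator\{r=q\}$ for $r,q\in\{0,\dots,N\}$, I obtain
\begin{equation}
\Pr[Y=q]=\frac{1}{N+1}\sum_{\ell=0}^{N}\mathcal{C}^{-q\ell}\prod_{n=1}^{N}\bigl[1+(\mathcal{C}^{\ell}-1)(1-\mathfrak{p}_n)\bigr].
\end{equation}
Summing this identity over $q=0,\dots,k$ and substituting into the expression for $\Pr[Y>k]$ yields the right-hand side displayed in the proposition.

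To justify that the result is a lower bound (the strict inequality in the statement) rather than an equality, I would interpret the conditioning $T=k\mathcal{GA}$ as taking $k=\lceil T/(\mathcal{GA})\rceil$ so that $T\le k\mathcal{GA}$; then $\{\Gamma_{\text{III}}>k\mathcal{GA}\}\subseteq\{\Gamma_{\text{III}}>T\}$, hence $\Pr[\Gamma_{\text{III}}>T]\ge\Pr[Y>k]$, which is precisely the inequality claimed. The main technical point is the roots-of-unity inversion of $G_Y$; the combinatorial factor $\mathcal{C}^{-k\ell}$ inside the outer sum over $q$ comes from the geometric series in $q$ that would otherwise collapse only at $\ell=0$, and the primary obstacle is keeping the bookkeeping of this DFT inversion and the rounding of $T$ to a lattice point $k\mathcal{GA}$ consistent with the direction of the bound.
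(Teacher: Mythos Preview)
Your approach is essentially the same as the paper's: identify $\sum_n Z_n$ as a Poisson--binomial variable and recover its PMF via a DFT/roots-of-unity inversion of the probability generating function, then sum to get the CDF. The paper actually derives an \emph{equality} in the appendix (citing a DFT reference rather than carrying out the inversion explicitly as you do) and never justifies the strict inequality in the proposition; your rounding interpretation $k=\lceil T/(\mathcal{GA})\rceil$ is an addition beyond what the paper provides.
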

    \begin{proof}
    The proof of this proposition is given in Appendix.
    \end{proof}

    \begin{figure}[t]
            \centering
                \includegraphics[width=1\linewidth-1cm, height=5cm]{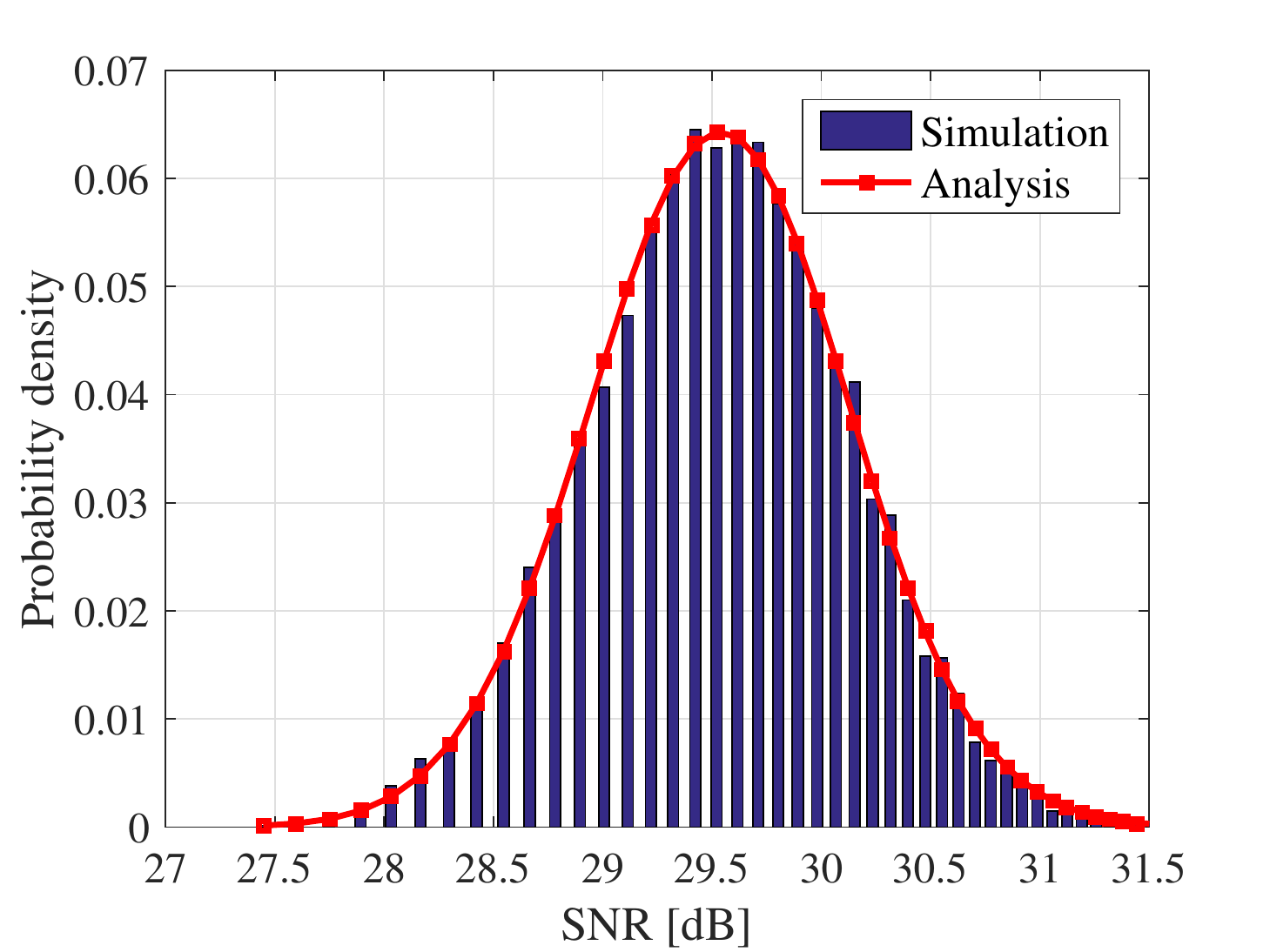}
                \caption{\textbf{Approximation \rom{1}}, PMF of $\Gamma$ when $N=64$.}
            \label{ProdDis1}
    \end{figure}

 Apart from above approximations, using Chernoff bound and profiting from theorems 1 and 2 in \cite{WeiBook}, we respectively have upper Chernoff bound on the SNR coverage probability as follows.
    \begin{align}
    \left\{\begin{array}{l}
    \Pr\left[\Gamma>T|T<\mathcal{M}\right]<1-\left(2-T/\mathcal{M}\right)^{T-2\mathcal{M}}e^{\left(\mathcal{M}-T\right)}\vspace{2mm}\\
        \Pr\left[\Gamma>T|T>\mathcal{M}\right]<\left(T/\mathcal{M}\right)^{-T}e^{\left(T-\mathcal{M}\right)} 
    \end{array}
        \right.
    \end{align}

\section{Numerical Results}
In this section, the performance of the proposed method under different scenarios is evaluated. The evaluations include A) SNR coverage comparison between RIS-assisted model and its counterparts, B) Impact of blockage density on SNR coverage performance, and C) Impact of RIS size and blockage probability on SNR.

We use MATLAB and consider a UPA RIS-empowered wall with area of $20\times 20$ $\text{m}^2$; where the rest of the parameters used in the simulations are given in TABLE \ref{tab1} unless otherwise specified.

\begin{table}[h]
\begin{center}
\centering
\captionsetup{width=1\linewidth}
\caption{System numerical parameters.\label{tab1}}
\label{tab:par}
\begin{tabular}{lc} 
\hline
System parameters & Corresponding value  \\ 
 \hline
 \hline
 Signal power (downlink)  &  30 dBm\\
 \hline
  Pathloss exponent, $\alpha$  &  $4$\\
 \hline
BS antenna array size, $M$ & 64 \\
 \hline
Noise floor power & -110 dBm\\
\hline
Small-scale fading& $\sim$\textit{Nakagami}$(3,1)$\\
\hline
Sensor attenuation factor, $B_n$ & 0.9\\
%
\hline

\end{tabular}
\end{center}
\end{table}

\subsection{SNR coverage comparison between RIS-assisted model and its counterparts}

Fig. \ref{fig:33} compares the RIS-assisted model and its two counterparts in terms of the SNR coverage performance. For this set of simulations, the location of the outdoor BS is fixed in $60$ m distance from the intent building wall empowered by RIS-sensors and the indoor UE is in $10$ m distance of the RIS wall. The height of the BS and the building level where the UE is located are set as $200$ and $100$ meters, respectively. Among different building materials, glass windows are considered critical materials
for designing and optimizing O2I coverage since their penetration loss is less than that of other building materials which depends on
the composition, thickness, and layers of windows and varies
greatly over frequency \cite{OUT3,RAP}. We consider a clear glass with a penetration loss of $3.6$ dB in 28 GHz according the model in \cite{RAP}.  
Moreover, for the relay-assisted model we use the existing model in \cite{OUT1}. For outdoor blockage: $\lambda_\text{st}^\text{out}=25\frac{\text{bl}}{\text{km}^2}$, 
    $\mathbb{E}\{\mathcal{W}\}=\mathbb{E}\{\mathcal{L}\}=10$m, $\eta_1=0.5$, and for indoor blockage:  $\lambda_\text{st}^\text{in}=0.1\frac{\text{bl}}{\text{m}^2}$,
  $\lambda_\text{dy}^\text{in}=0.1\frac{\text{bl}}{\text{m}^2}$,
  $H^\text{in}_\text{bl}=2$m, $\mathbb{E}\{\mathcal{W}\}=\mathbb{E}\{\mathcal{L}\}=0.5$m, 
  $\eta_2=0.25$,
   $\mu_\text{in}=1$s, $V_\text{in}=0.5\frac{\text{m}}{\text{s}}$.

As shown in Fig. \ref{fig:33}, the RIS-assisted model outperforms both of its counterparts. In other words, the RIS-assisted O2I communication model overcomes the penetration loss due to the building materials. Furthermore, the proposed RIS-assisted model provides a diversity gain due to the wide surface of the RIS. As a result, the blockage probability becomes lower than that of the similar relay-aided counterpart in \cite{OUT1}.
Additionally, as the number of RIS-elements increases, the SNR coverage performance increases. We can see that the \textbf{Approximation-\rom{1}} agrees with the simulation results especially when $N$ is relatively large. Besides, \textbf{Approximation-\rom{2}} gives a wide range of the SNR coverage performance when $N$ is relatively small, e.g., $N=9$. It is evident that in such a small $N$ case, more precise approximation of the outage probability using numerical techniques by taking into account all combinations can be obtained.

\begin{figure}[t]
\centering
\captionsetup{width=1\linewidth}
  \includegraphics[width=8.5cm, height=5.5cm]{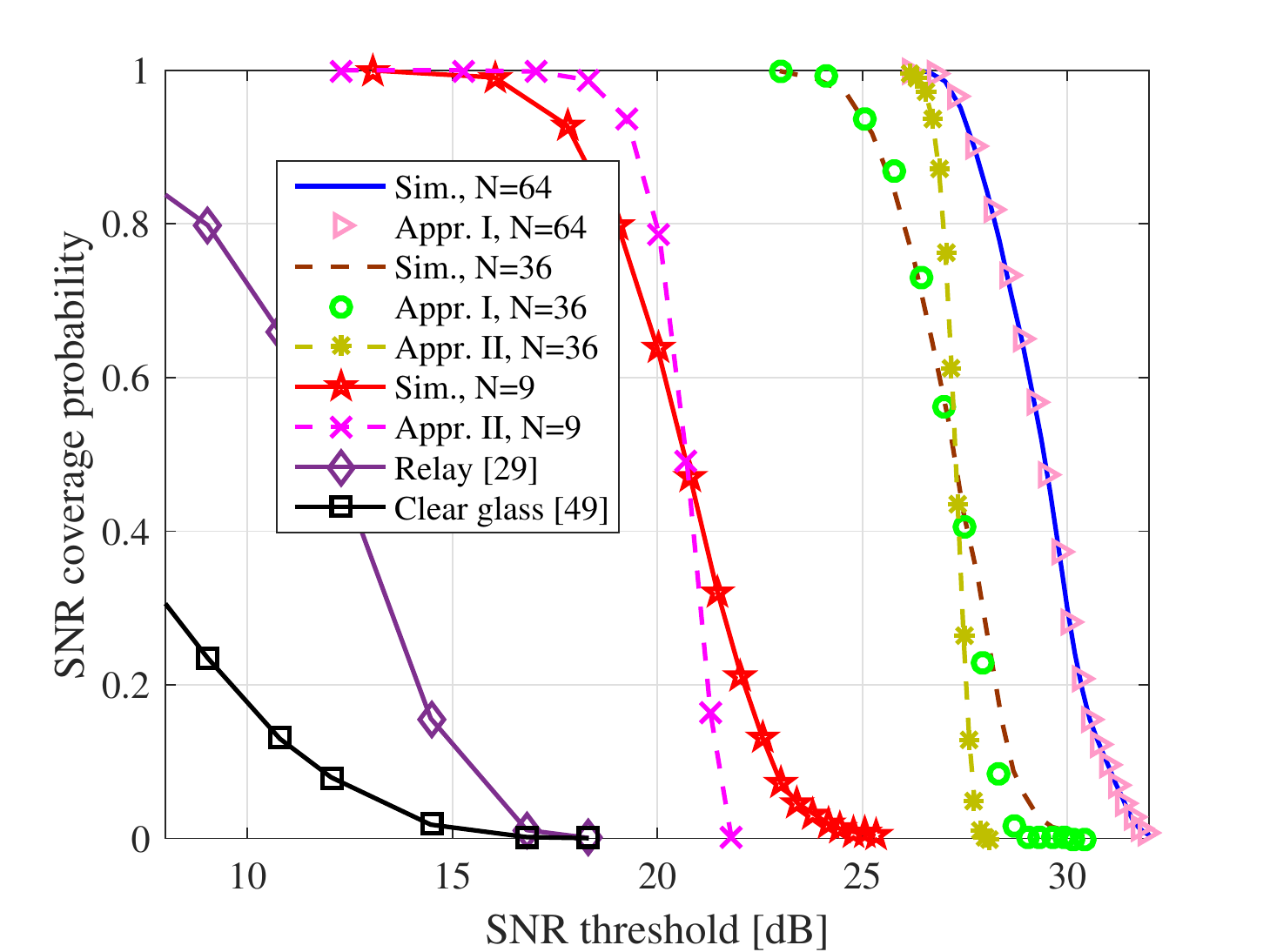}%
  \caption{SNR coverage performance w.r.t. the SNR threshold.}
  %
  \label{fig:33}
\end{figure}
\subsection{Impact of blockage density on SNR coverage performance}
This subsection evaluates the SNR coverage performance of the proposed RIS-assisted O2I communication model when the indoor static blockage density increases. For this set of simulations, the blockage parameters are the same as those in the previous subsection. As shown in Fig. \ref{fig:34}, when $\lambda_\text{st}^{\text{in}}$ increases, the SNR coverage probability decreases. This happens because the increase of $\lambda_\text{st}^\text{in}$ results in the increase of blockage probability as shown in Fig. \ref{ProdDis0}. Likewise, an increase of either $\lambda_\text{st}^\text{out}$ or $\lambda_\text{dy}^\text{in}$ has the same impact on the SNR coverage performance of the proposed model. 


\subsection{Impact of RIS size and blockage probability on SNR}
As the third evaluation, Fig. \ref{fig:35} shows the impact of the number of RIS-sensors (i.e., RIS-size) and the blockage probability on the SNR coverage performance of the RIS-assisted O2I communication. As it is shown, $N$ has a significant impact on the SNR performance and can overcome the blockage when RIS-size increases. In this simulation, the area of RIS wall is set to $20\times20$ $\text{m}^2$. However, in worse conditions that obstacles size is big and the blockage correlation between the paths is high, further increase of RIS-sensors may require to accompany with a wider RIS-wall to overcome blockage and also satisfies the assumption in \textit{Remark 1}. 
Nevertheless, there might be practical constraints that restrict the further increase of RIS size which is a bottleneck of the proposed idea but it is still a better solution compared to its counterparts in the similar blockage situations.
\begin{figure}[t]
\centering
\captionsetup{width=1\linewidth}
  \includegraphics[width=8.5cm, height=5.5cm]{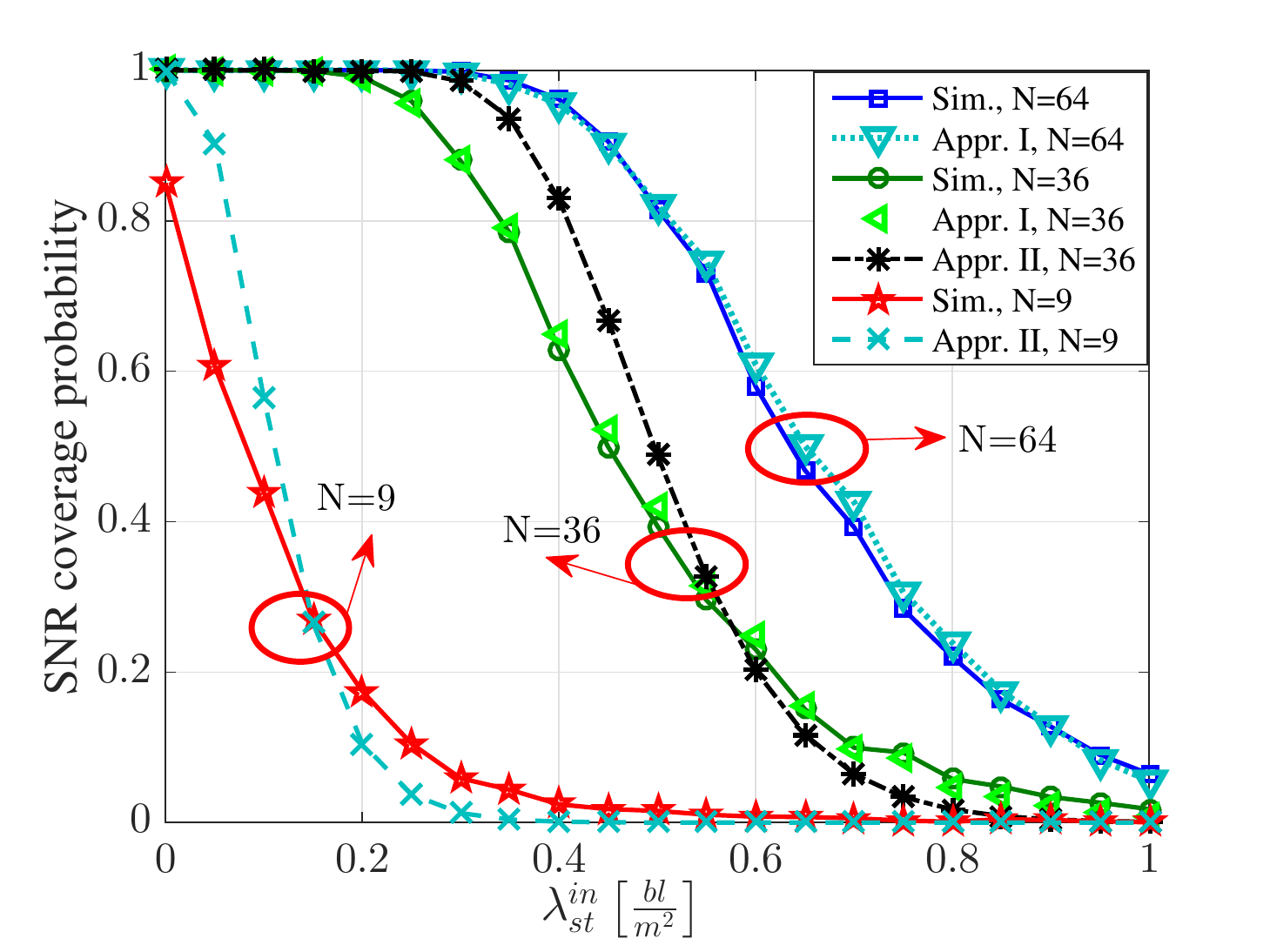}%
  \caption{SNR coverage performance w.r.t. the $\lambda_{\text{st}}^{\text{in}}$ when $T=20$ dB.}
  \label{fig:34}
\end{figure}

\section{Conclusions}

We developed  RIS empowered wall  for  intelligent  transition  of  the signal from outdoor to indoor space to ameliorate the  bottleneck  of existing  O2I  mmWave communication. We  exploited the metasurface design for O2I refractor based on our new idea of developing wall-based fabrication consisting  of  several  chipless RFID sensors into the building wall using outdoor and indoor antennas, which can potentially  form  a  system  that  will  transport  the signal  from  the O2I  space. The chipless RFID sensor used in the design includes a bank of delay lines that can be controlled by the RIS phase-shift controller to adjust the phase-shift at the sensor and execute passive beamforming at the RIS wall towards the indoor UE. We used  mathematics  as  a  tool for  reasoning  the  feasibility  of  the  proposed  system  with approximations for the signal strength and O2I communication and evaluating the blockage effect of the developed system.

\begin{figure}[t]
\centering
\captionsetup{width=1\linewidth}
  \includegraphics[width=8.5cm, height=5.5cm]{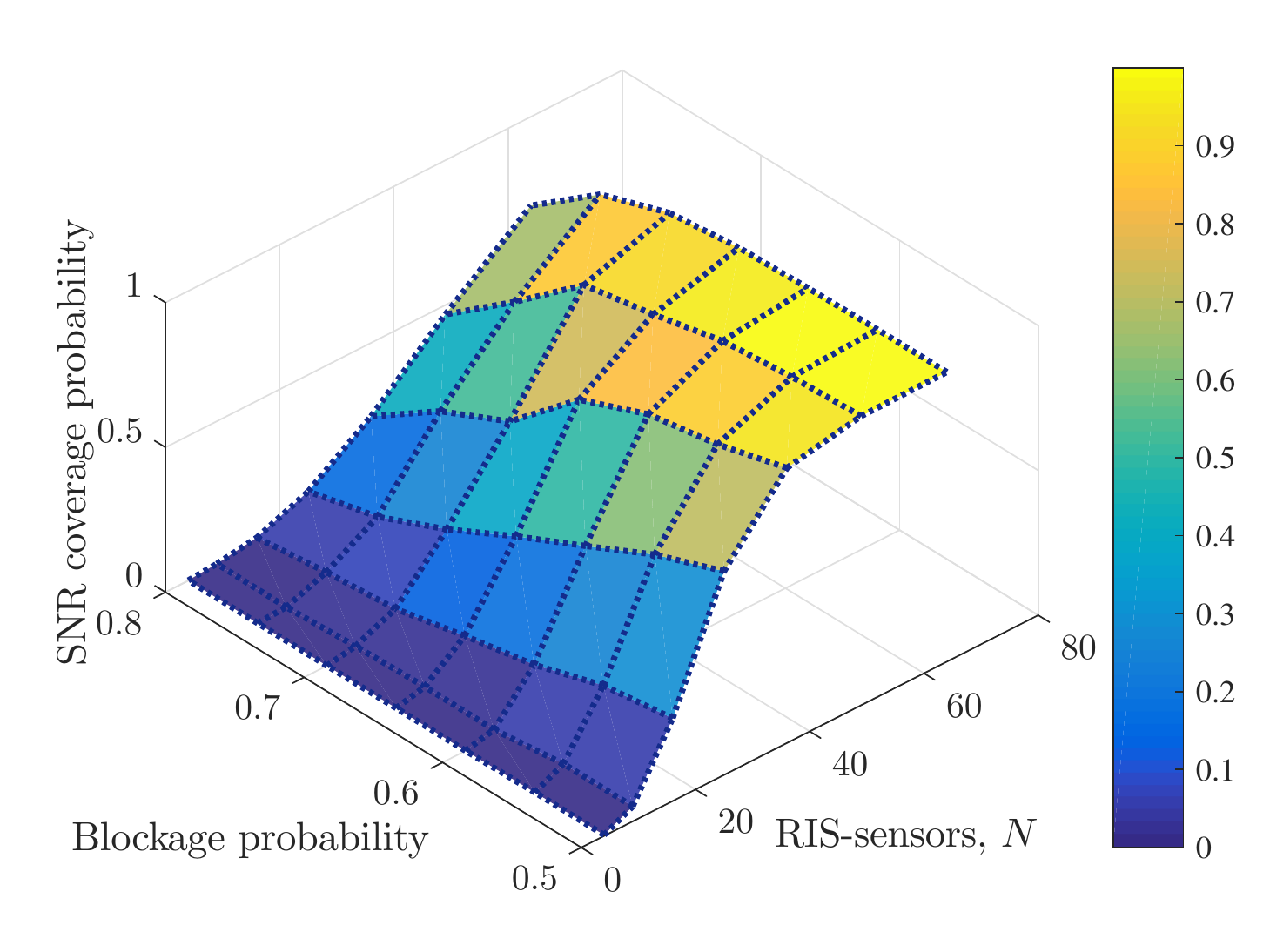}%
  \caption{SNR coverage performance w.r.t. the $p$ and $N$ when $T=22$ dB.}%
  \label{fig:35}
\end{figure}

\vspace{-2mm}\section*{Acknowledgement}
This work was supported by Australian Research Council (ARC) Discovery 2020 Funding, under grant number DP200100391.
%
\section*{Appendix\\Derivation of Proposition 2}
 \label{app2}
 PMF analysis of $\Gamma$ in \eqref{rt22} for small $N$ is quite complicated. Here we aim at restricting $\mathcal{A}_n$ to simplify the analysis when $N$ is small. Here, it is a well-known fact that the small-scale fading at mmWave bands is much less severe than that in low/medium frequency bands due to narrow beamforming \cite{Cov&rate}. Thus, a large \textit{Nakagami-m} parameter can be used to approximate the small-variance LoS fading.
     Moreover, thanks to the $B_n$ factor and having close and limited number of links, they result in stationary and relatively equivalent amplitudes of $\mathcal{A}_n$. 
   Thus, under these special circumstances, we can assume that the difference between the amplitudes of $\mathcal{A}_n$ coefficients is negligible (i.e., the path gains vary slowly) and consider them as one recommended average value as follows.
    \begin{equation}
        \forall n \quad \rightarrow \mathcal{A}_n\approx \mathbb{E}\{\mathcal{A}_n\}= \mathcal{A}.
    \end{equation}
        Consequently, the expression in \eqref{rt22} can be rewritten as
    \begin{equation}
        \Gamma_\text{\rom{3}} =\mathcal{GA}\sum_{n=0}^{N-1}Z_n,
        \nonumber
    \end{equation}
    with a Poisson binomial distribution. Thus, its closed-form PMF \cite{wang1993number} becomes
    \begin{equation}
        \Pr\{\Gamma_{\text{\rom{3}}}=k \mathcal{GA}\}=\sum_{\mathbf{G}\in \mathbf{I}_k} \, \, \prod_{\imath\in \mathbf{G}} (1-\mathfrak{p}_\imath) \prod_{\jmath\in \mathbf{G}^c} \mathfrak{p}_\jmath,
        \label{gam}
    \end{equation}
    where $\mathbf{I}_k$ is the set of all subsets of $k$ integers that can be selected from $\{1,\cdots,N\}$ and $\mathbf{G}^c$ is the complement of $\mathbf{G}$. Therefore, the coverage probability when RIS contains low number of elements becomes
     \begin{equation}
         \Pr\left[\Gamma_\text{\rom{3}}>T|T=k\mathcal{GA}\right]=1-\sum_{q=1}^k \sum_{\mathbf{G}\in \mathbf{I}_k} \, \, \prod_{\imath\in \mathbf{G}} (1-\mathfrak{p}_\imath) \prod_{\jmath\in \mathbf{G}^c} \mathfrak{p}_\jmath.
        \label{gam1}
     \end{equation}
      $\mathbf{I}_k$ contains $\frac{N!}{(N-k)!k!}$ elements which results in computation complexity in practice, but still suitable when $N$ is small. 
     Besides, the PMF in \eqref{gam} can be simplified utilizing discrete Fourier transform (DFT) \cite{DFT} as follows
    \begin{align}
         &\Pr\{\Gamma_\text{\rom{3}}=k \mathcal{GA}\}= \nonumber\\
         &\frac{1}{N+1}\sum_{\ell=0}^N \mathcal{C}^{-k\ell} \prod_{n=1}^N\left[ 1+(\mathcal{C}^\ell-1)(1-\mathfrak{p}_n)\right],
    \end{align}
    where $\mathcal{C}=\exp \left[\frac{2j\pi}{N+1}\right]$. Therefore, the SNR coverage probability can be re-expressed as
    \begin{align}
         &\Pr\left[\Gamma_\text{\rom{3}}>T|T=k\mathcal{GA}\right]=\nonumber\\
         & 1- \sum_{q=0}^k \frac{1}{N+1}\sum_{\ell=0}^N \mathcal{C}^{-k\ell} \prod_{n=1}^N\left[ 1+(\mathcal{C}^\ell-1)(1-\mathfrak{p}_n)\right]\nonumber.
    \end{align}
    The analysis is complete.


\bibliographystyle{ieeetran.bst}
\bibliography{ref}

\end{document}